\newcommand{\mdp}{\mathcal{M}}
\newcommand{\mec}{\mathcal{N}}
\newcommand{\product}{\mathcal{P}}
\newcommand{\ra}{\mathcal{A}}
\newcommand{\act}{A}
\newcommand{\prob}{\mathbf{P}}
\newcommand{\ap}{\mathrm{AP}}
\newcommand{\run}{\mathrm{Run}}
\newcommand{\runfin}{\mathrm{Run}_\mathrm{fin}}
\newcommand{\probm}{\mathrm{Pr}}
\newcommand{\G}{\mathbf{G}}
\newcommand{\F}{\mathbf{F}}
\newcommand{\X}{\mathbf{X}}
\newcommand{\U}{\mathbf{U}}
\newcommand{\strat}{C}
\newcommand{\pstrat}{\zeta}
\newcommand{\pisur}{\pi_{\mathrm{sur}}}
\newcommand{\sur}{\mathrm{sur}}
\newcommand{\cycles}{\sharp}
\newcommand{\EC}{\mathrm{EC}}
\newcommand{\MEC}{\mathrm{MEC}}
\newcommand{\AEC}{\mathrm{AEC}}
\newcommand{\MAEC}{\mathrm{MAEC}}
\newcommand{\maec}{\mathsf{maec}^*}
\newcommand{\ie}{{\it i.e., }}
\newcommand{\eg}{{\it e.g., }}
\newtheorem{definition}{Definition}
\newtheorem{problem}{Problem}
\newtheorem{theorem}{Theorem}
\newtheorem{proposition}{Proposition}
\title{\LARGE \bf
Optimal Control of MDPs with Temporal Logic Constraints
}
\author{M\'{a}ria Svore\v{n}ov\'{a}, Ivana \v{C}ern\'{a} and Calin Belta
\thanks{M. Svore\v{n}ov\'{a}, I. \v{C}ern\'{a} are with Faculty of Informatics, Masaryk University, Brno, Czech Republic, {\tt\footnotesize svorenova@mail.muni.cz, cerna@muni.cz}. C. Belta is with Department of Mechanical Engineering and the Division of Systems Engineering, Boston University, Boston, MA, USA, {\tt\footnotesize cbelta@bu.edu}. This work was partially supported at Masaryk University by grants GAP202/11/0312, LH11065, and at Boston University by ONR grants MURI N00014-09-1051, MURI N00014-10-10952 and by NSF grant CNS-1035588.}%
}
\begin{document}

\maketitle
\thispagestyle{empty}
\pagestyle{empty}

\begin{abstract}

In this paper, we focus on formal synthesis of control policies for finite Markov decision processes with non-negative real-valued costs. We develop an algorithm to automatically generate a policy that guarantees the satisfaction of a correctness specification expressed as a formula of Linear Temporal Logic, while at the same time minimizing the expected average cost between two consecutive satisfactions of a desired property. The existing solutions to this problem are sub-optimal. By leveraging ideas from automata-based model checking and game theory, we provide an optimal solution. We demonstrate the approach on an illustrative example.


\end{abstract}

\section{INTRODUCTION}

Markov Decision Processes (MDP) are probabilistic models widely used in various areas, such as economics, biology, and engineering. In robotics, they have been successfully used to model the motion of systems with actuation and sensing uncertainty, such as ground robots \cite{LaAnBe-TRO-2011}, unmanned aircraft \cite{uavMDP}, and surgical steering needles~\cite{alterovitz2007stochastic}. MDPs are central to control theory~\cite{bertsekasVolumeII}, probabilistic model checking and synthesis in formal methods~\cite{baierBook,yannakakis95}, and game theory~\cite{competitivemdps}. 

MDP control is a well studied area (see \eg~\cite{bertsekasVolumeII}). The goal is usually to optimize 
the expected value of a cost over a finite time (\eg stochastic shortest path problem) or an average expected cost in infinite time (\eg average cost per stage problem). 
Recently, there has been increasing interest in developing MDP control strategies from rich specifications given as formulas of probabilistic temporal logics, such as 
Probabilistic Computation Tree Logic (PCTL) and Probabilistic Linear Temporal Logic (PLTL)~\cite{dennismdpltlmaxprob,LaAnBe-TRO-2011}. It is important to note that both optimal control and temporal logic control problems for MDPs have their counterpart in automata game theory. Specifically, optimal control translates to solving $1 \nicefrac{1}{2}$-player games with payoff functions, such as discounted-payoff and mean-payoff games~\cite{mdpmeanpayoffenergyparity}. 
Temporal logic control for MDPs  corresponds to solving $1 \nicefrac{1}{2}$-player games with parity objectives \cite{gamesforcs}.

Our aim is to optimize the behavior of a system subject to correctness (temporal logic) constraints. Such a connection between optimal and temporal logic control is an intriguing problem with potentially high impact in several applications. Consider, for example, a mobile robot involved in a persistent surveillance mission in a dangerous area under tight fuel or time constraints. The correctness requirement is expressed as a temporal logic specification, \eg ``Keep visiting A and then B and always avoid C''. The resource constraints translate to minimizing a cost function over the feasible trajectories of the robot. Motivated by such applications, in this paper we focus on correctness specifications given as LTL formulae and optimization objectives expressed as average expected cumulative costs per surveillance cycle (ACPC). 

The main contribution of this work is to provide a sound and complete solution to the above problem.  This paper can be seen as an extension of~\cite{majajanaCDC12,majaACC13,dennisCDC11,janayushanICRA12}. In ~\cite{majajanaCDC12}, we focused on deterministic transition systems and 
developed a finite-horizon online planner to provably satisfy an LTL constraint while optimizing the behavior of the system between every two consecutive satisfactions of a given proposition. 
We extended this framework in~\cite{majaACC13}, where we provided an algorithm to optimize the  long-term average behavior of deterministic transition systems with time-varying events of known statistics. The closest to this work is~\cite{dennisCDC11}, where the authors focus on a problem of optimal LTL control of MDPs with real-valued costs on actions. The correctness specification is assumed to include a persistent surveillance task and the goal is to minimize the long-term expected average cost between successive visits of the locations under surveillance. Using dynamic programming techniques, the authors design a solution that is sub-optimal in the general case. In~\cite{janayushanICRA12}, it is shown that, for a certain fragment of LTL, the solution becomes optimal. By using recent results from game theory~\cite{chatterjeeMFCS11}, in this paper we provide an optimal solution for full LTL.

The rest of the paper is organized as follows. In Sec.~\ref{sec:prelims} we introduce the notation and provide necessary definitions. The problem is formulated in Sec.~\ref{sec:pf}. The main algorithms together with discussions on their complexity are presented in Sec.~\ref{sec:solution}. Finally, Sec.~\ref{sec:casestudy} contains experimental results. 

\section{Preliminaries}\label{sec:prelims}

For a set $\mathsf{S}$, we use $\mathsf{S}^{\omega}$ and $\mathsf{S}^+$ to denote the set of all infinite and all non-empty finite sequences of elements of $\mathsf{S}$, respectively. 
For a finite sequence $\tau = a_0\dots a_n\in \mathsf{S}^+$, we use $|\tau|=n+1$ to denote the length of $\tau$. For $0\leq i\leq n$, $\tau(i)=a_i$ 
and $\tau^{(i)}=\mathsf{a_0\ldots a_i}$ is the finite prefix of $\tau$ of length $i+1$. We use the same notation 
for an infinite sequence from the set $\mathsf{S}^{\omega}$.

\subsection{MDP Control}\label{prelim:mdp}

\begin{definition}\label{def:mdp}
A \emph{Markov decision process} (MDP) is a tuple $\mdp=(S,\act,\prob,\ap,L,g)$, where $S$ is a non-empty finite set of states, $\act$ is a non-empty finite set of actions, $\prob \colon S\times \act \times S \to [0,1]$ is a transition probability function such that for every state $s\in S$ and action $\alpha \in \act$ it holds that $\sum_{s'\in S} \prob(s,\alpha,s')\in \{0,1\}$, 
$\ap$ is a finite set of atomic propositions, $L\colon S \to 2^{\ap}$ is a labeling function, and $g\colon S\times \act \to \mathbb{R}^+_0$ is a cost function.
An \emph{initialized} Markov decision process is an MDP $\mdp=(S,\act,\prob,\ap,L,g)$ with a distinctive initial state $s_{init}\in S$.
\end{definition}

An action $\alpha \in \act$ is called enabled in a state $s\in S$ if 
$\sum_{s'\in S} \prob(s,\alpha,s')=1$. With a slight abuse of notation, $\act(s)$ denotes the set of all actions enabled in a state $s$. We assume $\act(s)\neq \emptyset$ for every $s\in S$.


A run of an MDP $\mdp$ is an infinite sequence of states $\rho = s_{0} s_{1}\ldots \in S^{\omega}$ such that for every $i\geq 0$, there exists $\alpha_i \in \act(s_i)$, $\prob(s_{i},\alpha_{i},s_{i+1})>0$. 
We use $\run^{\mdp}(s)$ to denote the set of all runs of $\mdp$ that start in a state $s\in S$. Let $\run^{\mdp}=\bigcup_{s\in S}\run^{\mdp}(s)$. A~finite run $\sigma=s_0\ldots s_n\in S^+$ of $\mdp$ is a finite prefix of a run in $\mdp$ and $\runfin^{\mdp}(s)$ denotes the set of all finite runs of $\mdp$ starting in a state $s\in S$. Let $\runfin^{\mdp}=\bigcup_{s\in S}\runfin^{\mdp}(s)$. The length $|\sigma|=n+1$ of a finite run $\sigma=s_0\ldots s_n$ is also referred to as the number of stages of the run. The last state of $\sigma$ is denoted by $last(\sigma)=s_n$.

The word induced by a run $\rho = s_0s_1\ldots$ of $\mdp$ is an infinite sequence $L(s_0)L(s_1)\ldots\in (2^{\ap})^{\omega}$. Similarly, a finite run of $\mdp$ induces a finite word from the set $(2^{\ap})^+$.

\begin{definition}
Let $\mdp=(S,\act,\prob,\ap,L,g)$ be an MDP. An \emph{end component} (EC) of the MDP $\mdp$ is 
an MDP $\mec=(S_{\mec},\act_{\mec},\prob|_{\mec},\ap,L|_{\mec},g|_{\mec})$ such that $\emptyset \neq S_{\mec}\subseteq S$, $\emptyset \neq \act_{\mec}\subseteq \act$. 
For every $s\in S_{\mec}$ and $\alpha \in \act_{\mec}(s)$ it holds that $\{s'\in S\mid \prob(s,\alpha,s')>0\}\subseteq S_{\mec}$. For every pair of states $s,s'\in S_{\mec}$, there exists a finite run $\sigma \in \runfin^{\mec}(s)$ such that $last(\sigma)=s'$. We use $\prob|_{\mec}$ to denote the function $\prob$ restricted to the sets $S_{\mec}$ and $\act_{\mec}$. Similarly, we use $L|_{\mec}$ and $g|_{\mec}$ with the obvious meaning. If the context is clear, we only use $\prob,L,g$ instead of $\prob|_{\mec},L|_{\mec},g|_{\mec}$. EC $\mec$ of $\mdp$ is called \emph{maximal} (MEC) if there is no EC $\mec'=(S_{\mec'},\act_{\mec'},\prob,\ap,L,g)$ of $\mdp$ such that $\mec'\neq \mec$, $S_{\mec}\subseteq S_{\mec'}$ and $\act_{\mec}(s)\subseteq \act_{\mec'}(s)$ for every $s\in S_{\mec}$. 
The set of all end components and maximal end components of $\mdp$ are denoted by $\EC(\mdp)$ and $\MEC(\mdp)$, respectively.
\end{definition}

The number of ECs of an MDP $\mdp$ can be up to exponential in the number of states of $\mdp$ and they can intersect. On the other hand, MECs are pairwise disjoint and every EC is contained in a single MEC. Hence, the number of MECs of $\mdp$ is bounded by the number of states of $\mdp$.


\begin{definition}\label{def:strat}
Let $\mdp=(S,\act,\prob,\ap,L,g)$ be an MDP. A~\emph{control strategy} for $\mdp$ is a function $\strat \colon \runfin^{\mdp}\to \act$ such that for every $\sigma\in \runfin^{\mdp}$ it holds that $\strat(\sigma)\in \act(last(\sigma))$.
\end{definition}

A strategy $\strat$ for which $\strat(\sigma)=\strat(\sigma')$ for all finite runs $\sigma,\sigma'\in \runfin^{\mdp}$ with $last(\sigma)=last(\sigma')$ is called memoryless. In that case, we consider $\strat$ to be a function $\strat \colon S\to \act$.
A strategy is called finite-memory if it is defined as a tuple $\strat = (M,\mathsf{act},\Delta,\mathsf{start})$, where $M$ is a finite set of modes, $\Delta\colon M\times S\to M$ is a transition function, $\mathsf{act}\colon M\times S\to \act$ selects an action to be applied in $\mdp$, and $\mathsf{start}\colon S\to M$ selects the starting mode for every $s\in S$.

A run $\rho=s_0s_1\ldots \in \run^{\mdp}$ of an MDP $\mdp$ is called a run under a strategy $\strat$ for $\mdp$ if for every $i\geq 0$, it holds that $\prob (s_i,\strat(\rho^{(i)}),s_{i+1})>0$. A finite run under $\strat$ is a finite prefix of a run under $\strat$. The set of all infinite and finite runs of $\mdp$ under $\strat$ starting in a state $s\in S$ are denoted by $\run^{\mdp,\strat}(s)$ and $\runfin^{\mdp,\strat}(s)$, respectively. Let $\run^{\mdp,\strat}=\bigcup_{s\in S}\run^{\mdp,\strat}(s)$ and $\runfin^{\mdp,\strat}=\bigcup_{s\in S}\runfin^{\mdp,\strat}(s)$.

Let $\mdp$ be an MDP, $s$ a state of $\mdp$, and $\strat$ a strategy for $\mdp$. The following probability measure is used to argue about the possible outcomes of applying $\strat$ in $\mdp$ starting from $s$. 
Let $\sigma \in \runfin^{\mdp,\strat}(s)$ be a finite run. A cylinder set $\mathrm{Cyl}(\sigma)$ of $\sigma$ is the set of all runs of $\mdp$ under $\strat$ that have $\sigma$ as a finite prefix. 
There exists a unique probability measure $\probm^{\mdp,\strat}_{s}$ on the $\sigma$-algebra generated by the set of cylinder sets of all runs in $\runfin^{\mdp,\strat}(s)$. For $\sigma=s_0\ldots s_n\in \runfin^{\mdp,\strat}(s)$, it holds
$$\probm^{\mdp,\strat}_{s}(\mathrm{Cyl}(\sigma))=\prod_{i=0}^{n-1}\prob(s_i,\strat(\sigma^i),s_{i+1})$$
and $\probm^{\mdp,\strat}_{s}(\mathrm{Cyl}(s))=1$. Intuitively, given a subset $X\subseteq \run^{\mdp,\strat}(s)$, $\probm^{\mdp,\strat}_{s}(X)$ is the probability that a run of $\mdp$ under $\strat$ that starts in $s$ belongs to the set $X$.


The following properties hold for any MDP $\mdp$ (see, \eg~\cite{baierBook}). For every EC $\mec$ of $\mdp$, there exists a finite-memory strategy $\strat$ for $\mdp$ such that $\mdp$ under $\strat$ starting from any state of $\mec$ never visits a state outside $\mec$ and all states of $\mec$ are visited infinitely many times with probability 1. 
On the other hand, having any, finite-memory or not, strategy $\strat$, a state $s$ of $\mdp$ and a run $\rho$ of $\mdp$ under $\strat$ that starts in $s$, the set of states visited infinitely many times by $\rho$ forms an end component. 
Let $\mathsf{ec}\subseteq \EC(\mdp)$ be the set of all ECs of $\mdp$ that correspond, in the above sense, to at least one run under the strategy $\strat$ that starts in the state $s$. 
We say that the strategy $\strat$ leads $\mdp$ from the state $s$ to the set $\mathsf{ec}$.

%
%

\subsection{Linear Temporal Logic}\label{subsec:ltl}

\begin{definition}
\emph{Linear Temporal Logic} (LTL) formulae over a set $\ap$ of atomic propositions are formed according to the following grammar:
\begin{equation*}
\phi::= true\mid a \mid \neg \phi \mid 
\phi \wedge \phi \mid \X \, \phi \mid \phi \, \U \, \phi \mid \G\, \phi \mid \F\, \phi,
\end{equation*}
where $a\in \ap$ is an atomic proposition, $\neg$ 
and $\wedge$ are standard Boolean connectives, and $\X$ (\emph{next}), $\U$ (\emph{until}), $\G$ (\emph{always}), and $\F$ (\emph{eventually}) are temporal operators.
\end{definition}

Formulae of LTL are interpreted over the words from $(2^{\ap})^{\omega}$, such as those induced by runs of an MDP $\mdp$ (for details see \eg \cite{baierBook}). For example, a word $w \in {(2^{\ap})}^\omega$ satisfies $\G \,\phi$ and $\F \, \phi$ if $\phi$ holds in $w$ always and eventually, respectively. If the word induced by a run $\rho \in \run^{\mdp}$ satisfies a formula $\phi$, we say that the run $\rho$ satisfies $\phi$. With slight abuse of notation, we also use states or sets of states of the MDP as propositions in LTL formulae. 


For every LTL formula $\phi$, the set of all runs of $\mdp$ that satisfy $\phi$ is measurable in the probability measure $\probm^{\mdp,\strat}_s$ for any $\strat$ and $s$ 
~\cite{baierBook}. 
With slight abuse of notation, we use LTL formulae as arguments of $\probm^{\mdp,\strat}_s$. 
If for a state $s\in S$ it holds that $\probm^{\mdp,\strat}_s(\phi)=1$, we say that the strategy $\strat$ almost-surely satisfies $\phi$ starting from $s$. If $\mdp$ is an initialized MDP and $\probm^{\mdp,\strat}_{s_{init}}(\phi)=1$, we say that $\strat$ almost-surely satisfies $\phi$.



The LTL control synthesis problem for an initialized MDP $\mdp$ and an LTL formula $\phi$ over $\ap$ aims to find a strategy for $\mdp$ that almost-surely satisfies $\phi$. This problem can be solved using 
principles from probabilistic model checking~\cite{baierBook},~\cite{dennismdpltlmaxprob}. The algorithm itself is based on the translation of $\phi$ to a Rabin automaton and the analysis of an MDP that combines the Rabin automaton and the original MDP $\mdp$. 

%


\begin{definition}
A \emph{deterministic Rabin automaton} (DRA) is a tuple $\ra=(Q,2^{\ap},\delta,q_{0},Acc)$, where $Q$ is a non-empty finite set of states, $2^{\ap}$ is an alphabet, $\delta \colon Q\times 2^{\ap} \to Q$ is a transition function, $q_{0}\in Q$ is an initial state, and $Acc\subseteq 2^{Q}\times 2^{Q}$ is an 
accepting condition.
\end{definition}


A run of $\ra$ is a sequence $q_{0}q_{1}\ldots \in Q^{\omega}$ such that for every $i\geq 0$, there exists $A_i\in 2^{\ap}$, $\delta(q_i,A_i)=q_{i+1}$. We say that the word $A_{0}A_{1}\ldots \in (2^{\ap})^{\omega}$ induces the run $q_{0}q_{1}\ldots $. 
A run of $\ra$ is called accepting if there exists a 
pair $(B,G)\in Acc$ such that the run visits every state from $B$ only finitely many times and at least one state from $G$ infinitely many times. 

For every LTL formula $\phi$ over $\ap$, there exists a DRA $\ra_{\phi}$ such that all and only words from $(2^{AP})^{\omega}$ satisfying $\phi$ induce an accepting run of $\ra_{\phi}$~\cite{ltldraProof}. For translation algorithms  
see \eg~\cite{ltldraAlgorithms}, and their online implementations, \eg~\cite{ltl2dstar}. 


\begin{definition}
Let $\mdp=(S,\act,\prob,\ap,L,g)$ be an initialized MDP and $\ra=(Q,2^{\ap},\delta,q_{0},Acc)$ be a DRA. The \emph{product} of $\mdp$ and $\ra$ is the initialized MDP $\product = (S_{\product}, \act, \prob_{\product},$ $\ap_{\product}, L_{\product},g_{\product})$, where $S_{\product}=S\times Q$, $\prob_{\product}((s,q),\alpha,(s',q'))=\prob(s,\alpha,s')$ if $q'=\delta(q,L(s))$ and $0$ otherwise, 
$\ap_{\product}=Q$, $L_{\product}((s,q))=q$, $g_{\product}((s,q),\alpha)=g(s,\alpha)$. The initial state of $\product$ is $s_{\product init}=(s_{init},q_{0})$.
\end{definition}

Using the projection on the first component, every (finite) run 
of $\product$ projects to a (finite) run of $\mdp$ and vice versa, for every (finite) run of $\mdp$, there exists a (finite) run of $\product$ that projects to it.
Analogous correspondence exists between strategies for $\product$ and $\mdp$. It holds that the projection of a finite-memory strategy for $\product$ is also finite-memory. More importantly, for the product $\product$ of an MDP $\mdp$ and a DRA $\ra_{\phi}$ for an LTL formula $\phi$, the probability of satisfying the accepting condition $Acc$ of $\ra_{\phi}$ under a strategy $\strat_{\product}$ for $\product$ starting from the initial state $s_{\product init}$, \ie
$$\probm^{\product, \strat_{\product}}_{s_{\product init}}\big(\bigvee \limits_{(B,G)\in Acc} (\F\G(\neg B)\: \wedge \: \G\F\, G)\big),$$
is equal to the probability of satisfying the formula $\phi$ in the MDP $\mdp$ under the projected strategy $\strat$ starting from the initial state $s_{init}$. 

\begin{definition}
Let $\product= (S_{\product}, \act, \prob_{\product}, \ap_{\product}, L_{\product},g_{\product})$ be the product of an MDP $\mdp$ and a DRA $\ra$. An \emph{accepting end component} (AEC) of $\product$ is defined as an end component $\mec=(S_{\mec},\act_{\mec},\prob_{\product},\ap_{\product},L_{\product},g_{\product})$ of $\product$ for which there exists a 
pair $(B,G)$ in the acceptance condition of $\ra$ such that $L_{\product}(S_{\mec})\cap B=\emptyset$ and $L_{\product}(S_{\mec})\cap G\neq \emptyset$. We say that $\mec$ is accepting with respect to the pair $(B,G)$. 
An AEC $\mec=(S_{\mec},\act_{\mec},\prob_{\product},\ap_{\product},L_{\product},g_{\product})$ is called \emph{maximal} (MAEC)
 if there is no AEC $\mec'=(S_{\mec'},\act_{\mec'},\prob_{\product},\ap_{\product},L_{\product},g_{\product})$ such that $\mec'\neq \mec$, $S_{\mec}\subseteq S_{\mec'}$, $\act_{\mec}((s,q))\subseteq \act_{\mec'}((s,q))$ for every $(s,q)\in S_{\product}$ and $\mec$ and $\mec'$ are accepting with respect to the same pair.  
We use $\AEC(\product)$ and $\MAEC(\product)$ to denote the set of all accepting end components and maximal accepting end components of $\product$, respectively.
\end{definition}

Note that MAECs that are accepting with respect to the same pair are always disjoint. However, MAECs that are accepting with respect to different pairs can intersect.

From the discussion above it follows that 
a necessary condition for almost-sure satisfaction of the accepting condition $Acc$ by a strategy $\strat_{\product}$ for $\product$ is that there exists a set $\mathsf{maec}\subseteq \MAEC(\product)$ of MAECs such that $\strat_{\product}$ leads the product from the initial state to $\mathsf{maec}$.


\section{Problem Formulation}\label{sec:pf}

Consider an initialized MDP $\mdp=(S,\act,\prob,\ap,L,g)$ and a specification given as an LTL formula $\phi$ over $\ap$ of the form
\begin{equation}\label{eq:ltlsur}
\phi= \varphi \wedge \G\F\, \pisur,
\end{equation}
where $\pisur\in \ap$ is an atomic proposition and $\varphi$ is an LTL formula over $\ap$. 
Intuitively, a formula of such form states two partial goals -- mission goal $\varphi$ and surveillance goal $\G\F \, \pisur$. To satisfy the whole formula the system must accomplish the mission and visit the surveillance states $S_{\sur}=\{s\in S\mid \pisur \in L(s)\}$ infinitely many times. The motivation for this form of specification comes from applications in robotics, where persistent surveillance tasks are often a part of the specification. 
Note that the form in Eq.~(\ref{eq:ltlsur}) does not restrict the full LTL expressivity since every LTL formula $\phi_1$ can be translated into a formula $\phi_2$ of the form in Eq.~(\ref{eq:ltlsur}) that is associated with the same set of runs of $\mdp$. Explicitly, $\phi_2=\phi_1 \wedge \G\F\, \pisur$, where $\pisur$ is such that $\pisur \in L(s)$ for every state $s\in S$. 

In this work, we focus on a control synthesis problem, 
where the goal is to almost-surely satisfy a given LTL specification, while optimizing a long-term quantitative objective. The objective is to minimize the average expected cumulative cost between consecutive visits to surveillance states.

Formally, we say that every visit to a surveillance state completes a surveillance cycle. In particular, starting from the initial state, the first visit to $S_{\sur}$ completes the first surveillance cycle of a run. We use $\cycles(\sigma)$ to denote the number of completed surveillance cycles in a finite run $\sigma$ plus one. 
For a strategy $\strat$ for $\mdp$, the 
cumulative cost in the first $n$ stages of applying $\strat$ to $\mdp$ starting from a state $s\in S$ is
$$g_{\mdp,\strat}(s,n)=\sum_{i=0}^{n}g(\sigma^{\mdp,\strat}_{s,n}(i),\strat({\sigma^{\mdp,\strat}_{s,n}}^{(i)})),$$
where $\sigma^{\mdp,\strat}_{s,n}$ is the random variable whose values are finite runs of length $n+1$ from the set $\runfin^{\mdp,\strat}(s)$ and the probability of a finite run $\sigma$ is $\probm^{\mdp,\strat}_{s}(\mathrm{Cyl}(\sigma))$. Note that $g_{\mdp,\strat}(s,n)$ is also a random variable.
Finally, we define the average expected cumulative cost per surveillance cycle (ACPC) in the MDP $\mdp$ under a strategy $\strat$ as a function $V_{\mdp,\strat}\colon S\to \mathbb{R}^+_0$
 such that for a state $s\in S$
\begin{equation*}\label{eq:acpc}
V_{\mdp,\strat}(s)=\limsup_{n\to \infty} E\Big(\frac{g_{\mdp,\strat}(s,n)}{\cycles(\sigma^{\mdp,\strat}_{s,n})}\Big).
\end{equation*}
The problem we consider in this paper can be formally stated as follows.

\begin{problem}\label{pf:acpc}
Let $\mdp=(S,\act,\prob,\ap,L,g)$ be an initialized MDP and $\phi$ be an LTL formula over $\ap$ of the form in Eq.~(\ref{eq:ltlsur}). Find a strategy $\strat$ for $\mdp$ such that $\strat$ almost-surely satisfies $\phi$ and, at the same time, $\strat$ minimizes the ACPC value $V_{\mdp,\strat}(s_{init})$ among all strategies almost-surely satisfying $\phi$.
\end{problem}

The above problem was recently investigated in~\cite{dennisCDC11}. However, the solution presented by the authors is guaranteed to find an optimal strategy only if every MAEC $\mec$ of the product $\product$ of the MDP $\mdp$ and the DRA for the specification satisfies certain conditions 
(for details see~\cite{dennisCDC11}). 
In this paper, we present a solution to Problem~\ref{pf:acpc} that always finds an optimal strategy 
if one exists. The algorithm is based on principles from probabilistic model checking~\cite{baierBook} and game theory~\cite{chatterjeeMFCS11}, whereas the authors in~\cite{dennisCDC11} mainly use results from dynamic programming~\cite{bertsekasVolumeII}.

In the special case when every state of $\mdp$ is a surveillance state, 
Problem~\ref{pf:acpc} aims to find a strategy that minimizes the average expected cost per stage among all strategies almost-surely satisfying $\phi$. The problem of minimizing the average expected cost per stage (ACPS) in an MDP, without considering any correctness specification, is a well studied problem in optimal control, see \eg~\cite{bertsekasVolumeII}. It holds that there always exists a stationary strategy that minimizes the ACPS value starting from the initial state. 
In our approach to Problem~\ref{pf:acpc}, we use techniques for solving the ACPS problem to find a strategy that minimizes the ACPC value.



\section{Solution}\label{sec:solution}

Let $\mdp=(S,\act,\prob,\ap,L,g)$ be an initialized MDP and $\phi$ an LTL formula over $\ap$ of the form in Eq.~(\ref{eq:ltlsur}). To solve Problem~\ref{pf:acpc} for $\mdp$ and $\phi$ we leverage ideas from game theory~\cite{chatterjeeMFCS11} and construct an optimal strategy for $\mdp$ as a combination of a strategy that ensures the almost-sure satisfaction of the specification $\phi$ and a strategy that guarantees the minimum ACPC value among all strategies that do not cause immediate unrepairable violation of $\phi$.

The algorithm we present in this section works with the product $\product=(S_{\product}, \act, \prob_{\product}, \ap_{\product}, L_{\product},g_{\product})$ of the MDP $\mdp$ and a deterministic Rabin automaton $\ra_{\phi} = (Q,2^{\ap},\delta,q_{0},Acc)$ for the formula $\phi$. We inherit the notion of a surveillance cycle in $\product$ by adding the proposition $\pisur$ to the set 
$\ap_{\product}$ and to the set $L_{\product}((s,q))$ for every $(s,q)\in S_{\product}$ such that $\pisur \in L(s)$. Using the correspondence between strategies for $\product$ and $\mdp$, an optimal strategy $\strat$ for $\mdp$ is found as a projection of a strategy $\strat_{\product}$ for $\product$ which almost-surely satisfies the accepting condition $Acc$ of $\ra_{\phi}$ and at the same time, minimizes the ACPC value $V_{\product,\strat_{\product}}(s_{\product init})$ among all strategies for $\product$ that almost-surely satisfy $Acc$.
 
Since $\strat_{\product}$ must almost-surely satisfy the accepting condition $Acc$, it leads from the initial state of $\product$ to a set of MAECs. 
For every MAEC $\mec$, the minimum ACPC value $V_{\mec}^{*}((s,q))$ that can be obtained in $\mec$ starting from a state $(s,q)\in S_{\mec}$ is equal for all the states of $\mec$ and we denote this value $V_{\mec}^*$. 
The strategy $\strat_{\product}$ is constructed in two steps. 

First, we find a set $\maec$ of MAECs of $\product$ and a strategy $\strat_0$ that leads $\product$ from the initial state to the set $\maec$. 
We require that $\strat_0$ and $\maec$ minimize the weighted average of the values $V_{\mec}^*$ for $\mec \in \maec$. 
The strategy $\strat_{\product}$ applies $\strat_0$ from the initial state until $\product$ enters the set $\maec$. 

Second, we solve the problem of how to control the product once a state of an MAEC $\mec\in \maec$ is visited. Intuitively, we combine two finite-memory strategies, $\strat_{\mec}^{\phi}$ for the almost-sure satisfaction of the accepting condition $Acc$ and $\strat_{\mec}^V$ for maintaining the average expected cumulative cost per surveillance cycle. To satisfy both objectives, the strategy $\strat_{\product}$ is played in rounds. In each round, we first apply the strategy $\strat_{\mec}^{\phi}$ and then the strategy $\strat_{\mec}^V$, each for a specific (finite) number of steps. 


\subsection{Finding an optimal set of MAECs}\label{subsec:maecstrat0}

Let $\MAEC(\product)$ be the set of all MAECs of the product $\product$ that can be computed as follows. For every pair $(B,G)\in Acc$, we create a new MDP from $\product$ by removing all its states with label in $B$ and the corresponding actions. For the new MDP, we use one of the algorithms in~\cite{deAlfaroThesis97}, \cite{yannakakis95}, \cite{chatterjeeFasterEC11} to compute the set of all its MECs. Finally, for every MEC, we check whether it contains a state with label in $G$.

In this section, the aim is to find a set $\maec\subseteq \MAEC(\product)$ and a strategy $\strat_0$ for $\product$ that satisfy conditions formally stated below. Since the strategy $\strat_0$ will only be used 
to enter the set $\maec$, it is constructed as a partial function.

\begin{definition}
A \emph{partial 
strategy} $\pstrat$ for the MDP $\mdp$ is a partial function $\pstrat\colon \runfin^{\product}\to \act$, where if $\pstrat(\sigma)$ is defined for $\sigma\in \runfin^{\product}$, then $\pstrat(\sigma)\in \act(last(\sigma))$. 
\end{definition}

A partial stationary strategy for $\mdp$ can also be considered as a partial function $\pstrat\colon S\to \act$ or a subset $\pstrat \subseteq S\times \act$. 
The set $\run^{\mdp,\pstrat}$ of runs of $\mdp$ under $\pstrat$ contains all infinite runs of $\mdp$ that follow $\pstrat$ and all those finite runs $\sigma$ of $\mdp$ under $\pstrat$ for which $\pstrat(last(\sigma))$ is not defined. A finite run of $\mdp$ under $\pstrat$ is then a finite prefix of a run under $\pstrat$. 
The probability measure $\probm^{\mdp,\pstrat}_{s}$ is defined in the same manner as in Sec.~\ref{prelim:mdp}. We also extend the semantics of LTL formulas to finite words. 
For example, a formula $\F\G\, \phi$ is satisfied by a finite word if in some non-empty suffix of the word $\phi$ always holds.  

The conditions on $\maec$ and $\strat_0$ are as follows. First, the partial strategy $\strat_0$ leads $\product$ 
to the set $\maec$, \ie 
\begin{equation}\label{eq:cond1}
\probm^{\product,\strat_{0}}_{s_{\product init}}(\F\G \, (\bigcup_{\mec \in \maec}S_{\mec}))=1.
\end{equation}
Second, we require that $\maec$ and $\strat_0$ minimize the value
\begin{equation}\label{eq:cond2}
\sum_{\mec\in \maec}\probm^{\product,\strat_{0}}_{s_{\product init}}(\F\G \, S_{\mec})\cdot V_{\mec}^{*}.
\end{equation}

The procedure to compute the optimal ACPC value $V_{\mec}^{*}$ for an MAEC $\mec$ of $\product$ is described in the next section. Assume we already computed this value for each MAEC of $\product$. The algorithm to find the set $\maec$ and partial strategy $\strat_0$ is based on an algorithm for stochastic shortest path (SSP) problem. The SSP problem is one of the basic optimization problems for MDPs. Given an initialized MDP and its state $t$, the goal is to find a strategy under which the MDP almost-surely reaches the state $t$, 
so called terminal state, while minimizing the expected cumulative cost. 
If there exists at least one strategy almost-surely reaching the terminal state, then there exists a stationary optimal strategy. For details and algorithms see \eg~\cite{bertsekasVolumeII}. 

The partial strategy $\strat_0$ and the set $\maec$ are computed as follows. First, we create a new MDP $\product'$ from $\product$ by considering only those states of $\product$ that can reach the set $\MAEC(\product)$ with probability 1 and their corresponding actions. The MDP $\product'$ can be computed using backward reachability from the set $\MAEC(\product)$. If $\product'$ does not contain the initial state $s_{\product init}$, there exists no solution to Problem~\ref{pf:acpc}. Otherwise, we add a new state $t$ and for every MAEC $\mec \in \MAEC(\product')=\MAEC(\product)$, we add a new action $\alpha_{\mec}$ to $\product'$. From each state $(s,q)\in S_{\mec}, \mec\in \MAEC(\product')$, we define a transition under $\alpha_{\mec}$ to $t$ with probability $1$ and set its cost to $V_{\mec}^*$. All other costs in the MDP are set to $0$. 
Finally, we solve the SSP problem for $\product'$ and the state $t$ as the terminal state. Let $\strat_{SSP}$ be the resulting stationary optimal strategy for $\product'$. For every $(s,q)\in S_{\product}$, we define $\strat_0((s,q))=\strat_{SSP}((s,q))$ if the action $\strat_{SSP}((s,q))$ does not lead from $(s,q)$ to $t$, $\strat_0((s,q))$ is undefined otherwise. The set $\maec$ is the set of all MAECs $\mec$ for which there exists a state $(s,q)$ such that $\strat_{SSP}((s,q))=\alpha_{\mec}$.

\begin{proposition}\label{prop:maecstrat0}
The set $\maec$ and the partial stationary strategy $\strat_0$ resulting from the above algorithm satisfy the conditions in Eq.~(\ref{eq:cond1}) and Eq.~(\ref{eq:cond2}).
\end{proposition}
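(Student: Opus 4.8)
The plan is to reduce the two conditions~(\ref{eq:cond1}) and~(\ref{eq:cond2}) to the known correctness and optimality guarantees of the stochastic shortest path (SSP) algorithm applied to the auxiliary MDP $\product'$ augmented with the terminal state $t$. First I would establish condition~(\ref{eq:cond1}). By construction $\product'$ retains exactly those states of $\product$ that reach $\MAEC(\product)$ with probability~$1$, so if $s_{\product init}\in S_{\product'}$ then there is at least one strategy almost-surely reaching $t$ in the augmented MDP; hence the SSP problem has a solution and $\strat_{SSP}$ almost-surely reaches $t$ from $s_{\product init}$. A run of $\product'$ (with the augmentation) that reaches $t$ must do so via some action $\alpha_{\mec}$ from a state $(s,q)\in S_{\mec}$. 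I would argue that, because $\strat_{SSP}$ is stationary and $\strat_0$ is obtained from it by simply deleting the transitions into $t$, the corresponding run of $\product$ under $\strat_0$ is the finite run up to and including that state $(s,q)$, after which $\strat_0$ is undefined; equivalently, $\strat_0$ leads $\product$ from $s_{\product init}$ with probability $1$ to a state lying in $\bigcup_{\mec\in\maec}S_{\mec}$, which after re-instating the semantics of $\F\G$ on the finite-run outcomes of a partial strategy (as set up just before the proposition) is exactly~(\ref{eq:cond1}). A small technical point here is matching the event $\F\G(\bigcup_{\mec}S_{\mec})$ under the partial-strategy probability measure with the event ``$\strat_{SSP}$ eventually takes some $\alpha_\mec$'' under the full-strategy measure on $\product'$; this is where I would spend a few careful sentences, using that each $S_\mec$ is an end component so once entered it can be kept invariant, and that the undefined points of $\strat_0$ are precisely the states from which $\strat_{SSP}$ fires an $\alpha_\mec$.

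Next I would handle the optimality condition~(\ref{eq:cond2}). The key observation is a cost-matching identity: for $\strat_{SSP}$ in the augmented $\product'$, all edge costs are $0$ except the transitions into $t$, whose cost from a state of $S_\mec$ is $V_\mec^*$; therefore the expected cumulative cost of $\strat_{SSP}$ from $s_{\product init}$ equals $\sum_{\mec}\probm^{\product',\strat_{SSP}}_{s_{\product init}}(\text{reach } t \text{ via } \alpha_\mec)\cdot V_\mec^*$, and by the identification above this probability is $\probm^{\product,\strat_0}_{s_{\product init}}(\F\G\,S_\mec)$. So the SSP cost of $\strat_{SSP}$ is exactly the quantity in~(\ref{eq:cond2}) for the pair $(\maec,\strat_0)$. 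Conversely I would show that \emph{any} pair $(\maec',\strat_0')$ satisfying~(\ref{eq:cond1}) induces a strategy in the augmented $\product'$ reaching $t$ almost surely whose SSP cost equals $\sum_{\mec\in\maec'}\probm^{\product,\strat_0'}_{s_{\product init}}(\F\G\,S_\mec)\cdot V_\mec^*$: given $\strat_0'$, follow it until $\product$ settles into some $S_\mec$ (which happens a.s.\ by~(\ref{eq:cond1})), then fire $\alpha_\mec$ into $t$. This yields a feasible SSP strategy with the stated cost. Since $\strat_{SSP}$ is SSP-optimal, its cost is no larger than that of any such induced strategy, giving that $(\maec,\strat_0)$ minimizes~(\ref{eq:cond2}) over all feasible pairs.

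I expect the main obstacle to be the bookkeeping in the correspondence between (i) strategies/runs of the augmented MDP $\product'$ with terminal state $t$, (ii) the partial strategy $\strat_0$ on $\product$, and (iii) the event-algebra statements phrased with $\F\G$ under the partial-strategy probability measure. In particular one must be careful that ``$\strat_0$ leads $\product$ to $\maec$'' in the sense of~(\ref{eq:cond1}) is genuinely equivalent to ``$\strat_{SSP}$ reaches $t$ a.s.'', and that no probability mass is lost at states where $\strat_0$ is undefined but $\product$ has not yet entered any $S_\mec$ — this cannot happen precisely because such states were retained in $\product'$ only if they reach $\MAEC(\product)$ with probability $1$, and $\strat_{SSP}$ being optimal for a shortest-path objective cannot idle forever without hitting $t$. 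Once this dictionary is set up cleanly, both~(\ref{eq:cond1}) and~(\ref{eq:cond2}) follow from, respectively, the almost-sure reachability guarantee and the optimality guarantee of the standard SSP solution quoted from~\cite{bertsekasVolumeII}; the remaining arguments are routine.
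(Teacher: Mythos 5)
Your proposal is correct and follows exactly the paper's route: the paper's own proof is a one-line appeal to the fact that $\strat_{SSP}$ optimally solves the SSP problem for $\product'$ and $t$, with condition~(\ref{eq:cond1}) coming from the almost-sure reachability of the terminal state and condition~(\ref{eq:cond2}) from cost-optimality. You simply fill in the dictionary between runs of the augmented MDP and finite runs under the partial strategy $\strat_0$ that the paper leaves implicit, and you correctly flag the only delicate point (no probability mass lost at states where $\strat_0$ is undefined before entering $\maec$).
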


\begin{proof}
%
Both conditions follow directly from the fact that the strategy $\strat_{SSP}$ is an optimal solution to the SSP problem for $\product'$ and $t$.
\end{proof}


\subsection{Optimizing ACPC value in an MAEC}\label{subsec:reduction}

In this section, we compute the minimum ACPC value $V_{\mec}^*$ that can be attained in an MAEC $\mec\in \MAEC(\product)$ and construct the corresponding strategy for $\mec$. 
Essentially, we reduce the problem of computing the minimum ACPC value to the problem of computing the minimum ACPS value 
by reducing $\mec$ to an MDP such that every state of the reduced MDP is labeled with the surveillance proposition $\pisur$.

%

Let $\mec=(S_{\mec},\act_{\mec},\prob_{\product},\ap_{\product},L_{\product},g_{\product})$ be an MAEC of $\product$. Since it is an MAEC, there exists a state $(s,q)\in S_{\mec}$ with $\pisur \in L_{\product}((s,q))$. Let $S_{\mec_{\sur}}$ denote the set of all such states in $S_\mec$. We reduce $\mec$ to an MDP 
$$\mec_{\sur}=(S_{\mec_{\sur}},\mathbf{\act}_{\sur},\prob_{\sur},\ap_{\product},L_{\product},g_{\sur})$$
using Alg.~\ref{alg:reduction}. For the sake of readability, we use singletons such as $v$ instead of pairs such as $(s,q)$ to denote the states of $\mec$. The MDP $\mec_{\sur}$ is constructed from $\mec$ by eliminating states from $S_{\mec}\backslash S_{\mec_{\sur}}$ one by one in arbitrary order.
The actions $\mathbf{\act}_{\sur}$ are partial stationary strategies for $\mec$ in which we remember all the states and actions we eliminated. Later we prove that the transition probability $\prob_{\sur}(v,\pstrat,v')$ for states $v,v'\in S_{\mec_{\sur}}$ and an action $\pstrat\in \mathbf{\act}_{\sur}(v)$ is the probability that in $\mec$ under the partial stationary strategy $\pstrat$, if we start from the state $v$, the next state that will be visited from the set $S_{\mec_{\sur}}$ is the state $v'$, \ie the first surveillance cycle is completed by visiting $v'$. The cost $g_{\sur}(v,\pstrat)$ is the expected cumulative cost gained in $\mec$ using partial stationary strategy $\pstrat$ from $v$ until we reach a state in $S_{\mec_{\sur}}$.

\begin{figure*}[t]
\begin{center}
\scalebox{0.4}{\input{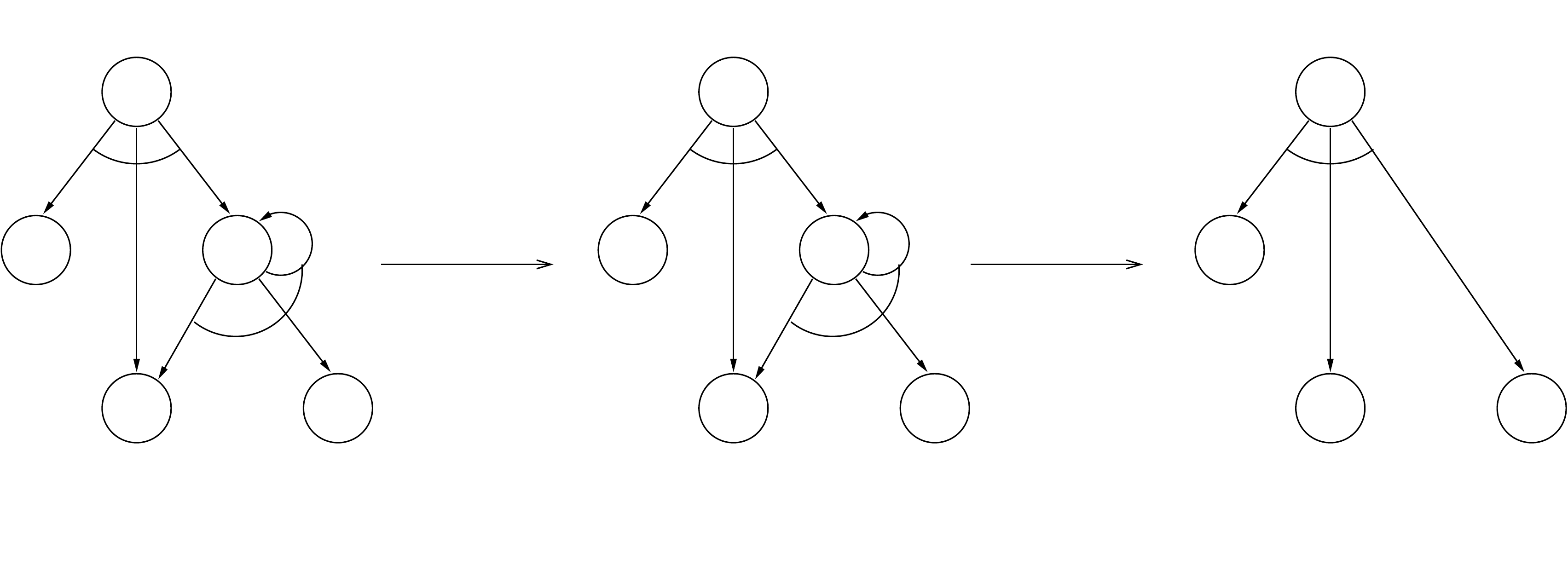_t}}
\end{center}
\caption{Illustration of Alg.~\ref{alg:reduction}. A part of an MAEC $\mec$ is shown in the left. An auxiliary MDP $X$ is constructed by transforming actions of $\mec$ to partial stationary strategies. The MDP $X$ after eliminating the state $v$ is shown on the right. The costs associated with actions are depicted in blue.}
\label{fig:redex}
\end{figure*}

In Fig.~\ref{fig:redex}, we demonstrate the reduction on an example using the notation introduced in Alg.~\ref{alg:reduction}. 
On the left side, we see a part of an MAEC $\mec$ with five states and two actions. First, we build an MDP $X=(S_X,\mathbf{\act}_{X},\prob_{X},\ap_{\product},L_{\product},g_{X})$ from $\mec$ by transforming every action of every state to a partial stationary strategy with a single pair given by the state and the action. 
The MDP $X$ is used in the algorithm as an auxiliary MDP to store the current version of the reduced system. Assume we want to reduce the state $v$. We consider all ``incoming'' and ``outgoing'' actions of $v$ and combine them pairwise as follows. There is only one outgoing action from $v$ in $X$, namely $\pstrat$, and only one incoming action, namely action $\pstrat_{old}$ of state $v_{from}$. Since $\pstrat$ and $\pstrat_{old}$ do not conflict as partial stationary strategies on any state of $\mec$, we merge them to create a new partial stationary strategy $\pstrat_{new}$ that is an action of $v_{from}$. The transition probability $\prob_X(v_{from},\pstrat_{new},v_{to})$ for a state $v_{to}$ of $X$ is computed as the sum of the transition probability $\prob_X(v_{from},\pstrat_{old},v_{to})$ of transiting from $v_{from}$ to $v_{to}$ using the old action $\pstrat_{old}$ and the probability of entering $v_{to}$ by first transiting from $v_{from}$ to $v$ using $\pstrat_{old}$ and from $v$ eventually reaching $v_{to}$ using $\pstrat$. 
The cost $g_X(v_{from},\pstrat_{new})$ is the expected cumulative cost gained starting from $v_{from}$ by first applying action $\pstrat_{old}$ and if we transit to $v$, applying $\pstrat$ until a state different from $v$ is reached. Now that we considered every pair of an incoming and outgoing action of $v$, the state $v$ and its incoming and outgoing actions are reduced. The modified MDP $X$ is depicted on the right side of Fig.~\ref{fig:redex}. 

\begin{algorithm}[t]
\small
\caption{Reduction of an MAEC $\mec$ to $\mec_{\sur}$}
\label{alg:reduction}
\begin{algorithmic}[1]
\REQUIRE{$\mec=(S_{\mec},\act_{\mec},\prob_{\product},\ap_{\product},L_{\product},g_{\product})$}
\ENSURE{$\mec_{\sur}=(S_{\mec\sur},\mathbf{\act}_{\sur},\prob_{\sur},\ap_{\product},L_{\product},g_{\sur})$}

\STATE let $X=(S_X,\mathbf{\act}_{X},\prob_{X},\ap_{\product},L_{\product},g_{X})$ be an MDP where \\
\begin{itemize}
\item $S_X:= S_{\mec}$,
\item for $v\in S_X:$\\
$\mathbf{A}_{X}(v):=\{\pstrat_{\alpha}\mid \pstrat_{\alpha}=\{(v,\alpha)\}, \alpha \in \act_{\mec}(v)\}$,
\item for $v,v'\in S_X, \pstrat\in \mathbf{\act}_{X}:$\\
$\prob_{X}(v,\pstrat,v') := \prob_{\product}(v,\pstrat(v),v')$,
\item for $v\in S_X, \pstrat\in \mathbf{\act}_{X}:$\\
$g_{X}(v,\pstrat):=g_{\product}(v,\pstrat(v))$
\end{itemize}

\WHILE{$S_X\backslash S_{\mec\sur}\neq \emptyset$}
\STATE let $v\in S_X\backslash S_{\mec\sur}$
\FORALL{$\pstrat\in \mathbf{\act}_{X}(v)$}
\IF{$\prob_{X}(v,\pstrat,v)<1$}
\FORALL{$v_{from}\in S_X, \pstrat_{old}\in \mathbf{\act}_{X}(v_{from})$}
\IF{$\prob_{X}(v_{from},\pstrat_{old},v)>0$ and $\pstrat_{old},\pstrat$ do not conflict for any state from $S_X$}
\STATE $\pstrat_{new}:=\pstrat_{old} \cup \pstrat$
\STATE add $\pstrat_{new}$ to $\mathbf{\act}_{X}(v_{from})$
\STATE for every $v_{to}\in S_X$:\\
{\scriptsize
\begin{equation*}
\hspace*{-1.6cm}
\begin{split}
\prob_{X}(v_{from},\pstrat_{new},v_{to}) :=\, & \prob_{X}(v_{from},\pstrat_{old},v_{to})\, +\\
& +\prob_{X}(v_{from},\pstrat_{old},v)\cdot \frac{\prob_{X}(v,\pstrat,v_{to})}{1-\prob_{X}(v,\pstrat,v)}\\
g_{X}(v_{from},\pstrat_{new}) :=\, & g_{X}(v_{from},\pstrat_{old})\, + \\
& + \prob_{X}(v_{from},\pstrat_{old},v)\cdot \frac{g_{X}(v,\pstrat)}{1-\prob_{X}(v,\pstrat,v)}
\end{split}
\end{equation*}
}
\STATE remove $\pstrat_{old}$ from $\mathbf{\act}_{X}(v_{from})$
\ENDIF
\ENDFOR
\ENDIF
\STATE remove $\pstrat$ from $\mathbf{\act}_{X}(v)$
\ENDFOR
\STATE remove $v$ from $S_X$
\ENDWHILE
\STATE return $X$
\end{algorithmic}
\end{algorithm}

\begin{proposition}\label{prop:reduction}
Let $\mec=(S_{\mec},\act_{\mec},\prob_{\product},\ap_{\product},L_{\product},g_{\product})$ be an MAEC 
and $\mec_{\sur}=(S_{\mec_{\sur}},\mathbf{\act}_{\sur},\prob_{\sur},\ap_{\product},L_{\product},g_{\sur})$ its reduction resulting from Alg.~\ref{alg:reduction}. The minimum ACPC value that can be attained in $\mec_{\sur}$ starting from any of its states is the same and we denote it $V_{\mec_{\sur}}^*$. There exists a stationary strategy $\strat_{\mec_{\sur}}^V$ for $\mec_{\sur}$ that attains this value regardless of the starting state in $\mec_{\sur}$. Both $V_{\mec_{\sur}}^*$ and $\strat_{\mec_{\sur}}^V$ can be computed as a solution to the ACPS problem for $\mec_{\sur}$. 
It holds that $V_{\mec}^*=V_{\mec_{\sur}}^*$ and 
from $\strat_{\mec_{\sur}}^V$, one can construct a finite-memory strategy $\strat_{\mec}^V$ for $\mec$ which regardless of the starting state in $\mec$ attains the optimal ACPC value $V_{\mec}^*$.
\end{proposition}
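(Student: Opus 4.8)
The plan is to prove the statement in two stages: (i) establish what the probabilities $\prob_\sur$ and costs $g_\sur$ output by Alg.~\ref{alg:reduction} actually compute, and (ii) use this to identify the ACPC optimization in $\mec$ with a classical ACPS optimization in $\mec_\sur$. For (i), the first thing I would do is formulate and prove, by induction on the number of eliminated states, the following invariant of the while-loop of Alg.~\ref{alg:reduction}: at every point, each $\pstrat\in\mathbf{\act}_X(v)$ is a partial stationary strategy for $\mec$ whose domain is $\{v\}\cup D_\pstrat$ with $D_\pstrat\subseteq S_\mec\setminus S_X$ a set of already-eliminated states; following $\pstrat$ from $v$ in $\mec$ one almost surely reaches at some time $\geq 1$ a state outside $D_\pstrat$, that state lies in $S_X$, $\prob_X(v,\pstrat,v')$ is the probability that $v'$ is the first such state, and $g_X(v,\pstrat)$ is the expected cumulative cost incurred before reaching it. The base case holds because right after initialization $D_\pstrat=\emptyset$ and $\pstrat$ just takes one step of $\pstrat(v)$. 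In the induction step, when $v$ is eliminated and an incoming action $\pstrat_{old}$ of $v_{from}$ with $\prob_X(v_{from},\pstrat_{old},v)>0$ is merged with an outgoing action $\pstrat$ of $v$ with $\prob_X(v,\pstrat,v)<1$, the updated quantities are exactly those obtained by conditioning on whether the $\pstrat_{old}$-excursion from $v_{from}$ first leaves at $v$: the factor $\tfrac{1}{1-\prob_X(v,\pstrat,v)}$ is the expected number of consecutive excursions out of $v$ before one escapes it (a geometric count), and the cost update is the matching Wald identity. The non-conflict test makes $\pstrat_{old}\cup\pstrat$ a well-defined partial stationary strategy, and dropping actions with $\prob_X(v,\pstrat,v)=1$ discards exactly those that would trap a run at $v$ forever, preserving the almost-sure-escape clause. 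When the loop halts, $S_X=S_{\mec_\sur}$, so the invariant says that $\prob_\sur(v,\pstrat,v')$ is the probability that the first surveillance cycle from $v$ under $\pstrat$ ends at $v'$ and $g_\sur(v,\pstrat)$ is its expected cost. A small additional point, using that $\mec$ is an end component (so a surveillance state is reachable from every state), is that $\mathbf{\act}_\sur(v)\neq\emptyset$ for every $v$, so $\mec_\sur$ is a genuine MDP, and in fact $\mathbf{\act}_\sur(v)$ collects every partial stationary strategy on $\{v\}\cup(S_\mec\setminus S_{\mec_\sur})$ that almost surely reaches $S_{\mec_\sur}$ from $v$.

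For stage (ii), the invariant yields a measure-preserving correspondence between runs of $\mec_\sur$ and the subsequences of surveillance states of runs of $\mec$: playing $\pstrat_0,\pstrat_1,\dots$ in $\mec_\sur$ from $v_0$ models playing $\pstrat_0$ in $\mec$ from $v_0$ until the first surveillance state $v_1$, then $\pstrat_1$ until $v_2$, and so on, with the expected cost of the $i$-th surveillance cycle in $\mec$ equal to $g_\sur(v_i,\pstrat_i)$. Since every state of $\mec_\sur$ is labeled by $\pisur$, in $\mec_\sur$ a stage and a surveillance cycle are the same thing, so ACPC in $\mec_\sur$ equals ACPS in $\mec_\sur$. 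Also $\mec_\sur$ is communicating: a finite run of $\mec$ from $v$ to $v'$ visits a chain of surveillance states, and between consecutive ones the last remark of stage (i) provides an action in $\mathbf{\act}_\sur$ that moves from one to the next with positive probability. Classical average-cost MDP theory (\cite{bertsekasVolumeII}) then gives a constant optimal gain $V_{\mec_\sur}^*$ and a stationary strategy $\strat_{\mec_\sur}^V$ attaining it from every state, both computable by any ACPS solver---this establishes the first three claims.

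It remains to show $V_\mec^*=V_{\mec_\sur}^*$ together with the lifted strategy. For ``$\leq$'', I lift $\strat_{\mec_\sur}^V$ to the finite-memory strategy $\strat_\mec^V$ for $\mec$ whose modes are the elements of $\mathbf{\act}_\sur$ together with one initial mode: from a surveillance state $v$ it plays $\pstrat=\strat_{\mec_\sur}^V(v)$ in $\mec$ until the next surveillance state $v'$, then switches to mode $\strat_{\mec_\sur}^V(v')$, and from a non-surveillance state it first follows a partial strategy that almost surely reaches $S_{\mec_\sur}$. By stage (i) the induced chain on surveillance states and the expected per-cycle costs coincide with those of $\strat_{\mec_\sur}^V$ on $\mec_\sur$; since this chain is finite and every mode is left in finite expected time, the stage-indexed $\limsup$-average in $V_{\mec,\strat_\mec^V}$ equals the cycle-indexed one, and the finite-expected-cost initial segment is negligible, so $V_{\mec,\strat_\mec^V}(\cdot)=V_{\mec_\sur}^*$ from every state and $V_\mec^*\leq V_{\mec_\sur}^*$. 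For ``$\geq$'', let $\strat$ be any strategy for $\mec$ completing infinitely many surveillance cycles almost surely (the only strategies relevant to Problem~\ref{pf:acpc}); within each cycle, the behavior of $\strat$ from the current surveillance state until the next is a transient play in $\mec$ towards $S_{\mec_\sur}$, so by the occupation-measure (LP) description of transient MDPs the pair (expected cycle cost, distribution of the next surveillance state) lies in the convex hull of $\{(g_\sur(v,\pstrat),\prob_\sur(v,\pstrat,\cdot))\mid\pstrat\in\mathbf{\act}_\sur(v)\}$. Hence $\strat$ is simulated in $\mec_\sur$ by a history-dependent randomized strategy with the same expected cycle-cost sequence, whose average is $\geq V_{\mec_\sur}^*$ by optimality of stationary strategies for ACPS even against history-dependent ones; relating this to the ACPC of $\strat$ as above gives $V_{\mec,\strat}(\cdot)\geq V_{\mec_\sur}^*$, so $V_\mec^*\geq V_{\mec_\sur}^*$.

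The main obstacle is the ``$\geq$'' direction: one must show that arbitrary history-dependent control inside a surveillance cycle cannot beat the partial \emph{stationary} strategies gathered into $\mathbf{\act}_\sur$, and one must reconcile the stage-indexed $\limsup$ in the definition of $V_{\mdp,\strat}$ with the cycle-indexed average natural in $\mec_\sur$; I expect both to be handled via occupation measures and the finiteness of expected surveillance-cycle lengths, the latter guaranteeing that the two time scales agree.
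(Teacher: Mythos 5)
Your proof follows essentially the same route as the paper: the same while-loop invariant for Alg.~\ref{alg:reduction} (with the same geometric-series computation of the merged transition probabilities and expected costs), the same identification of the ACPC problem in $\mec_{\sur}$ with the ACPS problem because every state of $\mec_{\sur}$ carries $\pisur$, and the same lifting of $\strat_{\mec_{\sur}}^V$ to a finite-memory strategy on $\mec$ preceded by an almost-sure reaching phase into the states where the unwrapped partial strategies are defined. The one place you go beyond the paper is the direction $V_{\mec}^*\geq V_{\mec_{\sur}}^*$: the paper settles it with ``follows from the correspondence,'' a correspondence stated only for partial \emph{stationary} strategies, whereas your occupation-measure argument that history-dependent play within a surveillance cycle cannot leave the convex hull of the stationary per-cycle behaviors, together with your explicit reconciliation of the stage-indexed $\limsup$ with the cycle-indexed average via finite expected cycle lengths, supplies precisely the detail that step needs.
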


\begin{proof}
We prove the following correspondence between $\mec$ and $\mec_{\sur}$. For every $v,v'\in S_{\mec_{\sur}}$ and $\pstrat\in \mathbf{\act}_{\sur}(v)$, it holds that $\pstrat$ is a well-defined partial stationary strategy for $\mec$. The transition probability $\prob_{\sur}(v,\pstrat,v')$ is the probability that in $\mec$, when applying $\pstrat$ starting from $v$, the first surveillance cycle is completed by visiting $v'$, \ie  
$$\prob_{\sur}(v,\pstrat,v')=\probm^{\mec,\pstrat}_{v}(\X (\neg S_{\mec \sur}\U \, v')).$$
The cost $g_{\sur}(v,\pstrat)$ is the expected cumulative cost gained in $\mec$ when applying $\pstrat$ starting from $v$ until 
the first surveillance cycle is completed. On the other hand, for every partial stationary strategy $\pstrat$ for $\mec$ such that 
$$\probm^{\mec,\pstrat}_{v}(\F\, S_{\mec_{\sur}})=1$$
for some $v\in S_{\mec \sur}$, there exists an action $\pstrat'\in \mathbf{\act}_{\sur}(v)$ such that the action $\pstrat'$ corresponds to the partial stationary strategy $\pstrat$ in the above sense, \ie
$$\prob_{\sur}(v,\pstrat',v')=\probm^{\mec,\pstrat}_{v}(\X (\neg S_{\mec_{\sur}} \U\, v'))$$
for every $v'\in S_{\mec_{\sur}}$, and the cost $g_{\sur}(v,\pstrat')$ is the expected cumulative cost gained in $\mec$ when we apply $\pstrat$ starting from $v$ until we reach a state in $S_{\mec_{\sur}}$.

To prove the first part of the correspondence above, we prove the following invariant of Alg.~\ref{alg:reduction}. Let $X=(S_X,\mathbf{\act}_{X},\prob_{X},\ap_{\product},L_{\product},g_{X})$ be the MDP from the algorithm after the initialization, before the first iteration of the while cycle. It is easy to see that all actions of $X$ are well-defined partial stationary strategies. For the transition probabilities, it holds that
$$\prob_{X}(v_{from},\pstrat,v_{to})=\probm^{\mec,\pstrat}_{v_{from}}(\X(\neg S_X \U\, v_{to}))$$
for every $v_{from},v_{to}\in S_X$ and $\pstrat \in \mathbf{\act}_{X}(v_{from})$. The cost $g_{X}(v_{from},\pstrat)$ is the expected cumulative cost gained in $\mec$ starting from $v_{from}$ when applying $\pstrat$ until we reach a state in $S_X$. We show that these conditions also hold after every iteration of the while cycle.

Let $X$ satisfy the conditions above and let $v\in S_X\backslash S_{\mec\sur}$. By removing the state $v$ from $S_X$, we obtain a new version of the MDP $X'=(S_{X'},\mathbf{\act}_{X'},\prob_{X'},\ap_{\product},L_{\product},g_{X'})$. Note that $S_{X'}\cup \{v\}=S_X$. Let $v_{from}\in S_{X'}$ be a state of $X'$ and $\pstrat_{new}\in \mathbf{\act}_{X'}(v_{from})$ be its action such that $\pstrat_{new}$ has changed in the process of removing the state $v$. The action $\pstrat_{new}$ is a well-defined partial stationary strategy because it must have been created as a union of an action $\pstrat_{old}$ of $v_{from}$ and an action $\pstrat$ of $v$, both from the previous version $X$, which do not conflict on any state from $S_X$.

Let $\stackrel{X'}{\rightarrow} v_{to}$ denote the LTL formula $\X(\neg S_{X'} \U\, v_{to})$. For a state $v_{to}\in S_{X'}$, we prove that
$$\prob_{X'}(v_{from},\pstrat_{new},v_{to})=\probm^{\mec,\pstrat_{new}}_{v_{from}}(\stackrel{X'}{\rightarrow} v_{to}).$$
Since $\pstrat_{new}=\pstrat_{old}\cup \pstrat$, the probability in $\mec$ when applying $\pstrat_{new}$ starting from $v_{from}$ of reaching the state $v_{to}$ as the next state in $S_{X'}$ is the probability of reaching it as the next state in $S_X$ when using $\pstrat_{old}$ from $v_{from}$, plus the probability of reaching $v$ as the next state in $S_X$ from $v_{from}$ using $\pstrat_{old}$ and then eventually reaching the state $v_{to}$ from $v$ using $\pstrat$. This means 
\begin{equation*}
\footnotesize
\begin{split}
\probm^{\mec,\pstrat_{new}}_{v_{from}}(\stackrel{X'}{\rightarrow} v_{to}) &= \probm^{\mec,\pstrat_{old}}_{v_{from}}(\stackrel{X}{\rightarrow} v_{to})\, +\\
& \quad + \probm^{\mec,\pstrat_{old}}_{v_{from}}(\stackrel{X}{\rightarrow} v)\cdot \probm^{\mec,\pstrat}_{v}(\F\, v_{to})\\
& = \prob_{X}(v_{from},\pstrat_{old},v_{to}) + \prob_{X}(v_{from},\pstrat_{old},v)\cdot \\
& \quad \cdot \big(\sum \limits_{i=0}^{\infty}\prob_X(v,\pstrat,v)^i \cdot \prob_X(v,\pstrat,v_{to})\big)\\
& = \prob_{X}(v_{from},\pstrat_{old},v_{to})\, +\\
& \quad +\prob_{X}(v_{from},\pstrat_{old},v)\cdot \frac{\prob_{X}(v,\pstrat,v_{to})}{1-\prob_{X}(v,\pstrat,v)}
\end{split}
\end{equation*}
which is exactly as defined in Alg.~\ref{alg:reduction}.

Similarly, we prove that $g_{X'}(v_{from},\pstrat_{new})$ is the expected cumulative cost gained in $\mec$ starting from $v_{from}$ when applying $\pstrat_{new}$ until we reach a state in $S_{X'}$. As $\pstrat_{new}=\pstrat_{old}\cup \pstrat$, it is the expected cumulative cost of reaching a state in $S_X$ by using $\pstrat_{old}$ plus, in the case we reach $v$, the expected cumulative cost of eventually reaching a state in $S_{X'}$, \ie other than $v$, using $\pstrat$. To be specific, we have
\begin{equation*}
\footnotesize
\begin{gathered}
 g_{X}(v_{from},\pstrat_{old})\, + \prob_{X}(v_{from},\pstrat_{old},v)\cdot \\
 \cdot \big(\sum \limits_{i=0}^{\infty}\prob_X(v,\pstrat,v)^i\cdot (1-\prob_X(v,\pstrat,v))\cdot (i+1)\cdot g_X(v,\pstrat)\\
 = \\
 g_{X}(v_{from},\pstrat_{old})\, + \prob_{X}(v_{from},\pstrat_{old},v)\cdot \frac{g_{X}(v,\pstrat)}{1-\prob_{X}(v,\pstrat,v)}, 
\end{gathered}
\end{equation*}
just as defined in Alg.~\ref{alg:reduction}. This completes the proof of the first part of the correspondence between $\mec$ and $\mec_{\sur}$.

The second part of the correspondence between $\mec$ and $\mec_{\sur}$ follows directly from the fact that, in the process of removing a state $v\in S_X\backslash S_{\mec_{\sur}}$, we consider all combinations of actions of $v$ which eventually reach a state different from $v$, with all actions of all states $v_{from}$ having an action under which $v$ is reached with non-zero probability.  

From the correspondence between $\mec$ and $\mec_{\sur}$ it follows that in $\mec_{\sur}$, there exists a finite run between every two states. Therefore, the minimum ACPC value that can be obtained in $\mec_{\sur}$ from any of its states is the same and it is denoted by $V_{\mec_{\sur}}^*$. Since every state of $\mec_{\sur}$ is a surveillance state, the ACPC problem for $\mec_{\sur}$ is equivalent to solving the ACPS problem for $\mec_{\sur}$. Using one of the algorithms in~\cite{bertsekasVolumeII}, we obtain a stationary strategy $\strat_{\mec_{\sur}}^V$ that attains the ACPC value $V_{\mec_{\sur}}^*$ regardless of the starting state. From the correspondence between $\mec$ and $\mec_{\sur}$ it also follows that $V_{\mec_{\sur}}^*=V_{\mec}^*$.

Now we construct the strategy $\strat_{\mec}^V$ for $\mec$ 
 and show that it attains the minimum ACPC value $V_{\mec}^*$ regardless of the initial state. Intuitively, the strategy $\strat_{\mec}^V$ is constructed to lead to a single EC of $\mec$ that provides the minimum ACPC value and that is the EC encoded by the strategy $\strat_{\mec_{\sur}}^V$ for $\mec_{\sur}$.

Let $S_{def}\subseteq S_{\mec}$ be the set of all states $v\in S_\mec$ for which there exists a surveillance state $v_{\sur}\in S_{\mec_{\sur}}$ such that the partial strategy $\strat_{\mec_{\sur}}^{V}(v_{\sur})$ for $\mec$ is defined on the state $v$. We compute a partial strategy $\pstrat_{init}$ that leads from every state from $S_{\mec}\backslash S_{def}$ to the set $S_{def}$ as follows. Let $\mec'$ be an MDP that is created from $\mec$ by adding a new state $t$ and a new action $\alpha_{def}$. From every state  $v\in S_{def}$, we define a new transition under $\alpha_{def}$ to $t$ with probability 1 and cost 0. Let $\strat_{SSP}$ be a stationary optimal strategy for the SSP problem for $\mec'$ and $t$ as the terminal state. We define $\pstrat_{init}(v) = \strat_{SSP}(v)$ for every $v\in S_{\mec}\backslash S_{def}$.

The strategy $\strat_{\mec}^{V}$ is a then finite-memory strategy 
$$\strat_{\mec}^{V} = (M,\mathsf{act},\Delta,\mathsf{start}),$$
where $M=S_{\mec_{\sur}}\cup \{init\}$ is the set of modes, $\Delta\colon M\times S_{\mec}\to M$ is the transition function such that for every $m\in M,v\in S_{\mec}$
$$\Delta(m,v)=\begin{cases}
m & \text{if }v\not \in S_{\mec\sur},\\
v & \text{otherwise}.
\end{cases}$$
The function $\mathsf{act}\colon M\times S_{\mec}\to \act_{\mec}$ that selects an action to be applied in $\mec$ is for $m\in M,v\in S_{\mec}$ defined as
$$\mathsf{act}(m,v)=
\begin{cases}
\big(\strat_{\mec_{\sur}}^{V}(m)\big)(v) & \text{if }m\in S_{\mec_{\sur}}\\
\pstrat_{init}(v) & \text{otherwise.}
\end{cases}$$
Finally, $\mathsf{start}\colon S_{\mec}\to S_{\mec_{\sur}}$ selecting the starting mode for $v\in S_{\mec}$ is defined as 
$$\mathsf{start}(v)=\begin{cases}
v & \text{if }v\in S_{\mec_{\sur}},\\
m & \text{where } \big(\strat_{\mec_{\sur}}^{V}(m)\big)(v)\\
& \text{is defined,}\\
init & \text{otherwise.}
\end{cases}$$
The strategy attains the ACPC value $V_{\mec}^*$ since it only simulates the strategy $\strat_{\mec_{\sur}}^V$ by unwrapping the corresponding partial strategies.  
\end{proof}

The following property of the strategy $\strat_{\mec}^{V}$ is crucial for the correctness of our approach to Problem~\ref{pf:acpc}.

\begin{proposition}\label{prop:stratv}
For every $(s,q)\in S_{\mec}$, it holds that
$$\lim_{n\to \infty} \probm^{\mec,\strat_{\mec}^{V}}_{(s,q)}(\{\rho \mid \frac{g_{\product}(\rho^{(\cycles n)})}{n}\leq V_{\mec}^{*}\})=1,$$
where $g_{\product}(\rho^{(\cycles n)})$ denotes the cumulative cost gained in the first $n$ surveillance cycles of a run $\rho \in \run^{\mec}((s,q))$. Hence, for every $\epsilon>0$, there exists $j(\epsilon)\in \mathbb{N}$ such that if the strategy $\strat_{\mec}^{V}$ is applied from a state $(s,q)\in S_{\mec}$ for any $l\geq j(\epsilon)$ surveillance cycles, then the average expected cumulative cost per surveillance cycle in these $l$ surveillance cycles is at most $V_{\mec}^{*}+\epsilon$ with probability at least $1-\epsilon$, \ie
\begin{equation*}
\probm^{\mec,\strat_{\mec}^{V}}_{(s,q)}(\{\rho \mid \frac{g_{\product}(\rho^{(\cycles l)})}{l}\leq V_{\mec}^{*}+\epsilon\})\geq 1-\epsilon.
\end{equation*}
\end{proposition}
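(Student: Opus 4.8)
The plan is to derive both displayed statements from the fact, established in Proposition~\ref{prop:reduction}, that $\strat_{\mec}^V$ simulates the stationary optimal ACPS strategy $\strat_{\mec_{\sur}}^V$ on the reduced MDP $\mec_{\sur}$, together with a strong-law-of-large-numbers argument applied along surveillance cycles. First I would observe that, since $\strat_{\mec}^V$ eventually stays inside the single end component encoded by $\strat_{\mec_{\sur}}^V$ (entering it via $\pstrat_{init}$ in at most finitely many stages almost surely, because $\pstrat_{init}$ solves an SSP problem with an almost-surely reachable target), it suffices to analyze the behavior once this EC is entered; the finite prefix before entry contributes a bounded additive cost that is washed out after dividing by $n$. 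So without loss of generality I treat the run as living in the recurrent class of $\strat_{\mec_{\sur}}^V$ from the start.

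The key step is the cycle-wise renewal argument. Consider the successive visits to surveillance states $S_{\mec_{\sur}}$ under $\strat_{\mec_{\sur}}^V$: these form a Markov chain on a finite state space, and under the stationary strategy this chain has a recurrent communicating class. Let $Z_k$ be the incremental cost accumulated during the $k$-th surveillance cycle (i.e.\ between the $(k-1)$-st and $k$-th visits to $S_{\mec_{\sur}}$); by the correspondence in Proposition~\ref{prop:reduction}, $g_{\product}(\rho^{(\cycles n)}) = \sum_{k=1}^n Z_k$, and each $Z_k$ has finite expectation because the expected cost to return to $S_{\mec_{\sur}}$ is the cost $g_{\sur}$, which is finite by construction (Alg.~\ref{alg:reduction} only divides by $1 - \prob_X(v,\pstrat,v) > 0$). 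Applying the ergodic theorem / SLLN for finite Markov chains to the chain on $S_{\mec_{\sur}}$ driven by $\strat_{\mec_{\sur}}^V$, $\frac{1}{n}\sum_{k=1}^n Z_k$ converges almost surely to the stationary average cost per surveillance cycle, which is exactly $V_{\mec_{\sur}}^* = V_{\mec}^*$ by optimality of $\strat_{\mec_{\sur}}^V$ for the ACPS problem on $\mec_{\sur}$. Almost-sure convergence of $\frac{g_{\product}(\rho^{(\cycles n)})}{n}$ to $V_{\mec}^*$ immediately gives convergence in probability of the event $\{\frac{g_{\product}(\rho^{(\cycles n)})}{n}\le V_{\mec}^*\}$; one must be slightly careful that the SLLN limit is an equality $= V_{\mec}^*$, so $\le V_{\mec}^*$ holds in the limit as a $\le$ (the probability of the closed event tends to $1$), which is what the first display asserts.

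The second display is then just an $\epsilon$-$\delta$ restatement: since the sequence of probabilities $\probm^{\mec,\strat_{\mec}^V}_{(s,q)}(\{\rho\mid g_{\product}(\rho^{(\cycles n)})/n \le V_{\mec}^*+\epsilon\})$ tends to $1$ (it dominates the event in the first display, enlarged by $\epsilon$), for each $\epsilon>0$ there is $j(\epsilon)$ beyond which this probability is at least $1-\epsilon$; taking $l \ge j(\epsilon)$ finishes it. A uniformity remark is needed: $j(\epsilon)$ can be chosen independent of the starting state $(s,q)$ because there are finitely many starting states and one can take the maximum of the finitely many state-dependent thresholds (alternatively, the finite prefix to enter the recurrent class has uniformly bounded expected length over the finite state space).

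The main obstacle I anticipate is handling the contribution of the transient prefix cleanly — both the finitely many cycles spent before the run enters the recurrent EC of $\strat_{\mec_{\sur}}^V$ (via the $init$ mode and $\pstrat_{init}$) and, within the recurrent class, the fact that the first ``cycle'' may start from a non-surveillance state so that $\cycles$ is off by the $+1$ convention. Both are bounded contributions that vanish after division by $n$, but making the almost-sure statement rigorous requires noting that the number of pre-recurrence stages, and hence the pre-recurrence cost, is almost-surely finite (indeed has finite expectation, since $\pstrat_{init}$ is an SSP-optimal strategy reaching its target with probability one), and then invoking the SLLN on the tail. The rest is routine.
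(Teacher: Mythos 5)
Your route is genuinely different from the paper's. The paper's proof is essentially a citation: it invokes a result of~\cite{chatterjeeFasterEC11} stating that an ACPS-optimal stationary strategy satisfies the analogous limit property on the reduced MDP $\mec_{\sur}$ (where one stage equals one surveillance cycle), and then says the claim for $\mec$ ``follows directly from the construction'' of $\strat_{\mec}^{V}$ from $\strat_{\mec_{\sur}}^{V}$. You instead reprove the underlying fact from scratch: a renewal decomposition of the cost along successive visits to $S_{\mec_{\sur}}$, the ergodic theorem for the finite chain induced by the stationary strategy $\strat_{\mec_{\sur}}^{V}$, the observation that every recurrent class of an everywhere-optimal stationary strategy must have average cost exactly $V_{\mec}^{*}$, and an explicit treatment of the transient prefix (the $init$ mode and the off-by-one in $\cycles$) that the paper leaves implicit. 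Your version is more self-contained and actually supplies the ``follows directly from the construction'' step; the paper's version is shorter but outsources all the probabilistic content. Two small glosses in your argument are acceptable but worth flagging: the per-cycle cost $Z_k$ is not a deterministic function of the state at the start of the cycle, so you need the SLLN for the martingale differences $Z_k-E[Z_k\mid v_{k-1}]$ (fine here, since cycle lengths have geometric tails and costs are bounded); and the pre-recurrence cost is a.s.\ finite rather than bounded, which you do note later.

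There is, however, one step that fails as written: the deduction of the first display from almost-sure convergence. From $\frac{1}{n}\sum_{k\le n}Z_k\to V_{\mec}^{*}$ a.s.\ you conclude that the probability of the \emph{closed} event $\{g_{\product}(\rho^{(\cycles n)})/n\le V_{\mec}^{*}\}$ tends to $1$. That implication is false in general: if the per-cycle costs have positive variance under the stationary distribution (which they typically do --- in the case study $V_{\mec}^{*}=40.5$ while individual action costs are integers), the central limit theorem gives fluctuations of order $n^{-1/2}$ symmetric about $V_{\mec}^{*}$, so $\probm^{\mec,\strat_{\mec}^{V}}_{(s,q)}(g_{\product}(\rho^{(\cycles n)})/n\le V_{\mec}^{*})\to\tfrac12$, not $1$. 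What almost-sure convergence does give is $\probm^{\mec,\strat_{\mec}^{V}}_{(s,q)}(\{\rho\mid \limsup_{n}g_{\product}(\rho^{(\cycles n)})/n\le V_{\mec}^{*}\})=1$ (limit inside the event), and, for every $\epsilon>0$, $\probm^{\mec,\strat_{\mec}^{V}}_{(s,q)}(g_{\product}(\rho^{(\cycles n)})/n\le V_{\mec}^{*}+\epsilon)\to 1$. The latter is exactly the second display, which is the only part used in Theorem~\ref{prop:corr}, and your derivation of it (including the uniformization of $j(\epsilon)$ over the finitely many starting states) is correct. So the defect is localized: you should either restate the first display with the $\limsup$ inside the event, or prove only the $\epsilon$-version. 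To be fair, the paper's own transcription of the cited result has the same quantifier placement, so the issue is inherited from the statement rather than introduced by you --- but your proof sketch explicitly asserts the invalid implication, so it must be repaired rather than waved through.
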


\medskip

\begin{proof}
In~\cite{chatterjeeFasterEC11} the authors prove that a strategy solving the ACPS problem for an MDP satisfies a property analogous to the one in the proposition. Especially, for the strategy $\strat_{\mec_{\sur}}^V$ for the reduced MDP $\mec_{\sur}$, it holds that for any state $(s,q)\in S_{\mec_\sur}$
$$\lim_{n\to \infty} \probm^{\mec_{\sur},\strat_{\mec_{\sur}}^{V}}_{(s,q)}(\{\rho \mid \frac{g_{\mec_{\sur}}(\rho^{(n)})}{n}\leq V_{\mec_{\sur}}^{*}\})=1,$$
where $g_{\mec_{\sur}}(\rho^{(n)})$ denotes the cumulative cost gained in the first $n$ stages of a run $\rho \in \run^{\mec_{\sur}}((s,q))$. The proposition then follows directly from the construction of the strategy $\strat_{\mec}^V$ from the strategy $\strat_{\mec_{\sur}}^V$. 
\end{proof}


\subsection{Almost-sure acceptance in an MAEC}\label{subsec:accinmaec}

Here we design a strategy for an MAEC $\mec\in \MAEC(\product)$ that guarantees almost-sure satisfaction of the acceptance condition $Acc$ of $\ra_{\phi}$. Let $(B,G)$ be a pair in $Acc$ such that $\mec$ is accepting with respect to $(B,G)$, \ie $L_{\product}(S_{\mec})\cap B=\emptyset$ and $L_{\product}(S_{\mec})\cap G\neq \emptyset$. There exists a stationary strategy $\strat_{\mec}^{\phi}$ for $\mec$ under which a state with label in $G$ is reached with probability 1 regardless of the starting state, \ie 
\begin{equation}\label{eq:stratphi}
\probm^{\mec, \strat_{\mec}^{\phi}}_{(s,q)}(\F\, G)=1
\end{equation}
for every $(s,q)\in S_{\mec}$. The existence of such a strategy follows from the fact that $\mec$ is an EC~\cite{baierBook}. Moreover, we construct $\strat_{\mec}^{\phi}$ to minimize the expected cumulative cost before reaching a state in $S_{\mec}\cap S\times G$.

The strategy $\strat_{\mec}^{\phi}$ is found as follows. Let $\mec'$ be an MDP that is created from $\mec$ by adding a new state $t$ and a new action $\alpha_G$. From every state  $(s,q)\in S_{\mec}\cap S\times G$, we define a new transition under $\alpha_G$ to $t$ with probability 1 and cost 0. Let $\strat_{SSP}$ be a stationary optimal strategy for the SSP problem for $\mec'$ and $t$ as the terminal state. For a state $(s,q)\in S_{\mec}$, we define $\strat_{\mec}^{\phi}((s,q))=\strat_{SSP}((s,q))$ if the state $(s,q)$ does not have a label in $G$, otherwise $\strat_{\mec}^{\phi}((s,q))=\alpha$ for some $\alpha\in \act_{\mec}((s,q))$.

\begin{proposition}\label{prop:stratacc}
The strategy $\strat_{\mec}^{\phi}$ for $\mec$ resulting from the above algorithm almost-surely reaches the set $S_{\mec}\cap S\times G$ and minimizes the expected cumulative cost before reaching the set, regardless of the initial state.
\end{proposition}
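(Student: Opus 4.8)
The plan is to reduce the two assertions to standard facts about the stochastic shortest path (SSP) problem quoted in Sec.~\ref{subsec:maecstrat0}, which apply once we know that $\mec'$ admits at least one strategy almost-surely reaching the terminal state $t$. So the first thing I would establish is the existence of such a \emph{proper} strategy for $\mec'$. Since $\mec$ is accepting with respect to $(B,G)$, we have $L_{\product}(S_{\mec})\cap G\neq\emptyset$, i.e. $S_{\mec}\cap S\times G\neq\emptyset$. Because $\mec$ is an end component, for every pair $v,v'\in S_{\mec}$ there is a finite run of $\mec$ from $v$ to $v'$; in particular, from every state of $\mec$ there is a finite run reaching some $v_G\in S_{\mec}\cap S\times G$, and appending the action $\alpha_G$ gives a finite run of $\mec'$ that reaches $t$. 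Fixing for each state an action and a successor that strictly decrease the distance to $t$ in this reachability graph yields a memoryless ``attractor'' strategy under which, from any state, $t$ is reached within $|S_{\mec}|$ steps with probability at least $p_{\min}^{|S_{\mec}|}>0$ (with $p_{\min}$ the least positive transition probability of $\mec'$); iterating over blocks of $|S_{\mec}|$ steps shows $t$ is reached with probability $1$ from every state. Hence $\mec'$ has a proper strategy.

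Given this, I would invoke the SSP theory cited earlier: the costs of $\mec'$ are non-negative and a proper strategy exists, so there is a stationary strategy $\strat_{SSP}$ that is simultaneously optimal from every state, almost-surely reaches $t$, and minimizes the expected cumulative cost incurred before reaching $t$.

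It remains to transfer this back from $\mec'$ to $\mec$. The only transitions of $\mec'$ into $t$ are under $\alpha_G$ from states of $S_{\mec}\cap S\times G$, so along any run of $\mec'$ the first visit to $t$ is immediately preceded by the first visit to $S_{\mec}\cap S\times G$, and the prefix up to that first visit uses only states outside $S_{\mec}\cap S\times G$ — exactly the states on which $\strat_{\mec}^{\phi}$ is defined to coincide with $\strat_{SSP}$ (the arbitrary value of $\strat_{\mec}^{\phi}$ on states with label in $G$ plays no role in that prefix). Consequently runs of $\mec$ under $\strat_{\mec}^{\phi}$ and runs of $\mec'$ under $\strat_{SSP}$ have the same distribution up to the first visit to $S_{\mec}\cap S\times G$ (resp. $t$), and since the $\alpha_G$-transition has cost $0$, the expected cumulative cost gained in $\mec$ before the first visit to $S_{\mec}\cap S\times G$ equals the SSP cost attained by $\strat_{SSP}$ in $\mec'$. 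Therefore $\strat_{\mec}^{\phi}$ almost-surely reaches $S_{\mec}\cap S\times G$ from every state of $\mec$ (which is Eq.~(\ref{eq:stratphi})) and does so with minimum expected cumulative cost, uniformly in the initial state.

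I expect the main obstacle to be the first step: making precise, without re-deriving SSP theory, that an \emph{optimal} SSP strategy genuinely reaches $t$ with probability $1$ rather than merely attaining finite expected cost — this is exactly where the existence of a proper strategy and the non-negativity of the cost function are needed, given that zero-cost cycles are possible. The correspondence argument in the last step is routine, but it must be stated with care precisely because $\strat_{\mec}^{\phi}$ is left undefined (or arbitrary) on the $G$-labeled states.
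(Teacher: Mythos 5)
Your proof is correct and takes the same route as the paper, which simply observes that both claims follow from the optimality of $\strat_{SSP}$ for the SSP problem on $\mec'$ with terminal state $t$. Your write-up merely fills in the details the paper leaves implicit (existence of a proper strategy via the end-component connectivity, and the run correspondence between $\mec'$ and $\mec$), so no further comparison is needed.
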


\begin{proof}
It follows directly from the fact that $\strat_{SSP}$ optimally solves the SSP problem for the MDP $\mec'$ and $t$. 
\end{proof}

\subsection{Optimal strategy for $\product$}\label{subsec:stratproduct} 

Finally, we are ready to construct the strategy $\strat_{\product}$ for the product $\product$ that projects to an optimal solution for $\mdp$. 

First, starting from the initial state $s_{\product init}$, $\strat_{\product}$ applies the strategy $\strat_0$ resulting from the algorithm described in Sec.~\ref{subsec:maecstrat0} 
until a state 
of an MAEC in the set $\maec$ is reached. Let $\mec\in \maec$ denote the MAEC and let $(B,G)\in Acc$ be a pair from the accepting condition of $\ra_{\phi}$ such that $\mec$ is accepting with respect to $(B,G)$. 

Now, the strategy $\strat_{\product}$ starts to play the rounds. Each round consists of two phases. First, play the strategy $\strat_{\mec}^{\phi}$ from Sec.~\ref{subsec:accinmaec} until a state with label in $G$ is reached. Let us denote $k_{i}$ the number of steps we play $\strat_{\mec}^{\phi}$ in $i$-th round. The second phase applies the strategy $\strat_{\mec}^{V}$ from Sec.~\ref{subsec:reduction} until the number of completed surveillance cycles in the second phase of the current round is $l_{i}$. The number $l_{i}$ is any natural number for which 
$$l_{i}\geq \max\{j(\tfrac{1}{i}),i\cdot k_{i}\cdot g_{\product max}\},$$
where $j(\frac{1}{i})$ is from Prop.~\ref{prop:stratv} and $g_{\product max}$ is the maximum value of the costs $g_{\product}$. After applying the strategy $\strat_{\mec}^{V}$ for $l_i$ surveillance cycles, we proceed to the next round $i+1$.


\begin{theorem}\label{prop:corr}
The strategy $\strat_{\product}$ almost-surely satisfies the accepting condition $Acc$ of $\ra_{\phi}$ and at the same time, $\strat_{\product}$ minimizes the ACPC value $V_{\product, \strat_{\product}}(s_{\product init})$ among all strategies for $\product$ almost-surely satisfying $Acc$.
\end{theorem}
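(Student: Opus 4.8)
The plan is to split the argument into two parts: (i) almost-sure satisfaction of $Acc$, and (ii) optimality of the ACPC value. For part (i), observe that the strategy $\strat_{\product}$ eventually enters, with probability $1$, some MAEC $\mec\in\maec$ (this is Eq.~(\ref{eq:cond1}) in Prop.~\ref{prop:maecstrat0}). Once inside $\mec$, the run never leaves $S_{\mec}$ because both $\strat_{\mec}^{\phi}$ and $\strat_{\mec}^{V}$ are strategies for the end component $\mec$ (their supports stay inside $S_{\mec}$). It then suffices to show that a state with label in $G$ is visited infinitely often and no state with label in $B$ is ever visited. The latter is immediate since $L_{\product}(S_{\mec})\cap B=\emptyset$. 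For the former, in each round $i$ the first phase plays $\strat_{\mec}^{\phi}$, which by Prop.~\ref{prop:stratacc} (Eq.~(\ref{eq:stratphi})) reaches a $G$-labeled state with probability $1$ in finitely many steps; hence $k_i$ is almost-surely finite, the round almost-surely terminates (the second phase is played for a fixed finite number $l_i$ of surveillance cycles, each of which completes almost surely because $\mec_{\sur}$ is strongly connected under $\strat_{\mec}^{V}$), and infinitely many rounds are played, each producing at least one visit to $G$. By the product construction this is exactly almost-sure satisfaction of $Acc$, hence of $\phi$ after projection.

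For part (ii), I would first establish the lower bound: any strategy almost-surely satisfying $Acc$ must lead $\product$ to a set of MAECs, and on an MAEC $\mec$ the best achievable ACPC is $V_{\mec}^{*}$ by Prop.~\ref{prop:reduction}; taking the weighted average over the reached MAECs, Eq.~(\ref{eq:cond2}) shows that $\sum_{\mec\in\maec}\probm^{\product,\strat_0}_{s_{\product init}}(\F\G\,S_{\mec})\cdot V_{\mec}^{*}$ is a lower bound on $V_{\product,\strat_{\product'}}(s_{\product init})$ for every admissible $\strat_{\product'}$. Then I would show $\strat_{\product}$ attains this value. Condition on entering a particular $\mec\in\maec$. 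Within $\mec$, in round $i$ the strategy wastes at most $k_i$ stages on the $\strat_{\mec}^{\phi}$ phase, contributing at most $k_i\cdot g_{\product max}$ extra cost and $0$ surveillance cycles, while the $\strat_{\mec}^{V}$ phase runs for $l_i\geq j(\tfrac1i)$ surveillance cycles, so by Prop.~\ref{prop:stratv} its per-cycle average is at most $V_{\mec}^{*}+\tfrac1i$ with probability at least $1-\tfrac1i$. The key quantitative point is the choice $l_i\geq i\cdot k_i\cdot g_{\product max}$: the per-cycle overhead from the $\phi$-phase of round $i$ is at most $k_i\cdot g_{\product max}/l_i\leq 1/i\to 0$, so the "wasted" cost is asymptotically negligible relative to the number of cycles. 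Summing the cost and cycle counts over rounds $1,\dots,N$ and letting $N\to\infty$ (using a Cesàro/averaging argument together with the fact that $\sum k_i$ grows slower than $\sum l_i$, and that the events of Prop.~\ref{prop:stratv} hold eventually with probability $1$ by Borel--Cantelli since $\sum\tfrac1i$ diverges — here I would instead argue directly via the $\limsup$ in the definition of ACPC rather than Borel--Cantelli, observing that for the limsup it is enough that the bad-probability tends to $0$ and the overhead per cycle tends to $0$), I obtain $V_{\product,\strat_{\product}}((s,q))\leq V_{\mec}^{*}$ for every $(s,q)\in S_{\mec}$, and combined with the matching lower bound within $\mec$, equality. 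Finally, taking the expectation over which MAEC is entered, weighted by $\probm^{\product,\strat_0}_{s_{\product init}}(\F\G\,S_{\mec})$, and using that $\strat_0$ itself is played for only finitely many stages with probability $1$ (so its cost contributes $0$ to the limsup average over infinitely many cycles), gives $V_{\product,\strat_{\product}}(s_{\product init})=\sum_{\mec\in\maec}\probm^{\product,\strat_0}_{s_{\product init}}(\F\G\,S_{\mec})\cdot V_{\mec}^{*}$, which matches the lower bound. Projecting $\strat_{\product}$ to $\mdp$ preserves both the almost-sure satisfaction and the ACPC value, completing the proof.

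The main obstacle I anticipate is the careful bookkeeping in the attainment argument: making precise that the $\limsup_{n\to\infty}E\big(g_{\product,\strat_{\product}}(s,n)/\cycles(\sigma_{s,n})\big)$ — where $n$ counts stages, not cycles, and can fall in the middle of a phase — is controlled by the per-round estimates, despite the phase lengths $k_i$ being random. The delicate points are: (a) interchanging the expectation with the per-round decomposition, since $k_i$ depends on the history; (b) handling the worst case where the observation time $n$ lands inside a long $\strat_{\mec}^{V}$ phase of some round (here the per-cycle average is already near $V_{\mec}^{*}$, so this is fine) versus inside a $\strat_{\mec}^{\phi}$ phase (where we use that the preceding rounds have accumulated enough cycles to dilute the cost, which is exactly what $l_{i-1}\geq (i-1)k_{i-1}g_{\product max}$ and the growth of $\sum l_i$ buy us); and (c) ensuring the "bad" event in Prop.~\ref{prop:stratv} — average exceeding $V_{\mec}^{*}+\tfrac1i$ — contributes at most $\tfrac1i\cdot(\text{bounded cost})$ to the expectation, which is where boundedness of $g_{\product}$ and the structure of a finite MDP are essential. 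I would handle this by bounding, for each round, the conditional expected cost and conditional expected cycle count separately, then combining via the elementary inequality that a ratio of sums is at most the maximum of the ratios plus a vanishing correction.
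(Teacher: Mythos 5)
Your proof follows essentially the same route as the paper's: reach $\maec$ via Prop.~\ref{prop:maecstrat0}, obtain infinitely many $G$-visits from the almost-surely terminating first phases (Prop.~\ref{prop:stratacc}), bound the round-$i$ per-cycle average by $V_{\mec}^{*}+\tfrac{2}{i}$ with probability at least $1-\tfrac{1}{i}$ using Prop.~\ref{prop:stratv} together with $l_i\geq i\cdot k_i\cdot g_{\product max}$, and invoke Eq.~(\ref{eq:cond2}) for optimality of the weighted average over the reached MAECs. The extra bookkeeping you flag (randomness of $k_i$, the stage-indexed $\limsup$, and the contribution of the low-probability bad events to the expectation) is genuine but is equally elided in the paper's own two-line computation, so your proposal is, if anything, the more complete version of the same argument.
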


\begin{proof}
From Prop.~\ref{prop:maecstrat0} it follows that when applying the strategy $\strat_0$ from the initial state $s_{\product init}$, the set $\maec$ is reached with probability 1. 

Assume that $\product$ enters MAEC $\mec\in \maec$ that is accepting with respect to a pair $(B,G)\in Acc$. Let $i$ be the current round of $\strat_{\product}$ and $\epsilon_i=\tfrac{1}{i}$. According to Prop.~\ref{prop:stratacc}, a state with a label in $G$ is almost-surely reached. In addition, using Prop.~\ref{prop:stratv}, the average expected cumulative cost per surveillance cycle in the $i$-th round is at most 
\begin{align*}
&\frac{k_{i}\cdot g_{\mec max} + l_{i}(V_{\mec}^{*}+\epsilon_{i})}{l_{i}} = \\
&\qquad \qquad = V_{\mec}^{*} + \epsilon_{i} + \frac{k_{i}\cdot g_{\mec max}}{l_{i}}\\
&\qquad \qquad \leq V_{\mec}^{*} + \epsilon_{i} + \frac{1}{i} \qquad \qquad (l_{i}\geq i\cdot k_{i}\cdot g_{\mec max})\\
&\qquad \qquad = V_{\mec}^{*} + \frac{2}{i}
\end{align*}
with probability at least $1-\frac{1}{i}$. Therefore, in the limit, in the MAEC $\mec$, we both satisfy the LTL specification and reach the optimal ACPC value with probability $1$. Together with the fact that $\maec$ and $\strat_0$ satisfy the condition in Eq.~(\ref{eq:cond2}), we have that $\strat_{\product}$ is an optimal strategy for $\product$.
\end{proof}

\subsection{Complexity and discussion}\label{subsec:compldisc}

The size of a Rabin automaton for an LTL formula $\phi$ is in the worst case doubly exponential in the size of the set $\ap$. However, studies such as \cite{ltldraAlgorithms} show that in practice, for many LTL formulas, automata are much smaller and manageable. 

Once the product $\product$ is built, we compute the set $\MAEC(\product)$ by running $|Acc|$-times an algorithm for MEC decomposition, which is polynomial in the size of $\product$. The size of the set $\MAEC(\product)$ is in the worst case $|Acc|\cdot |S_\product|$. For each MAEC $\mec$, we compute its reduction $\mec_{\sur}$ using Alg.~\ref{alg:reduction} in time $\mathcal{O}(|S_{\mec}|\cdot |\act_\mec|^{\mathcal{O}(|S_\mec|)})$. The optimal ACPC value $V_{\mec}^*$ and an optimal finite-memory strategy $\strat_{\mec}^V$ are then found in time polynomial in the size of the reduced MDP.

\begin{figure*}[t]
\centering
\begin{tabular}{c c}
\scalebox{0.5}{\input{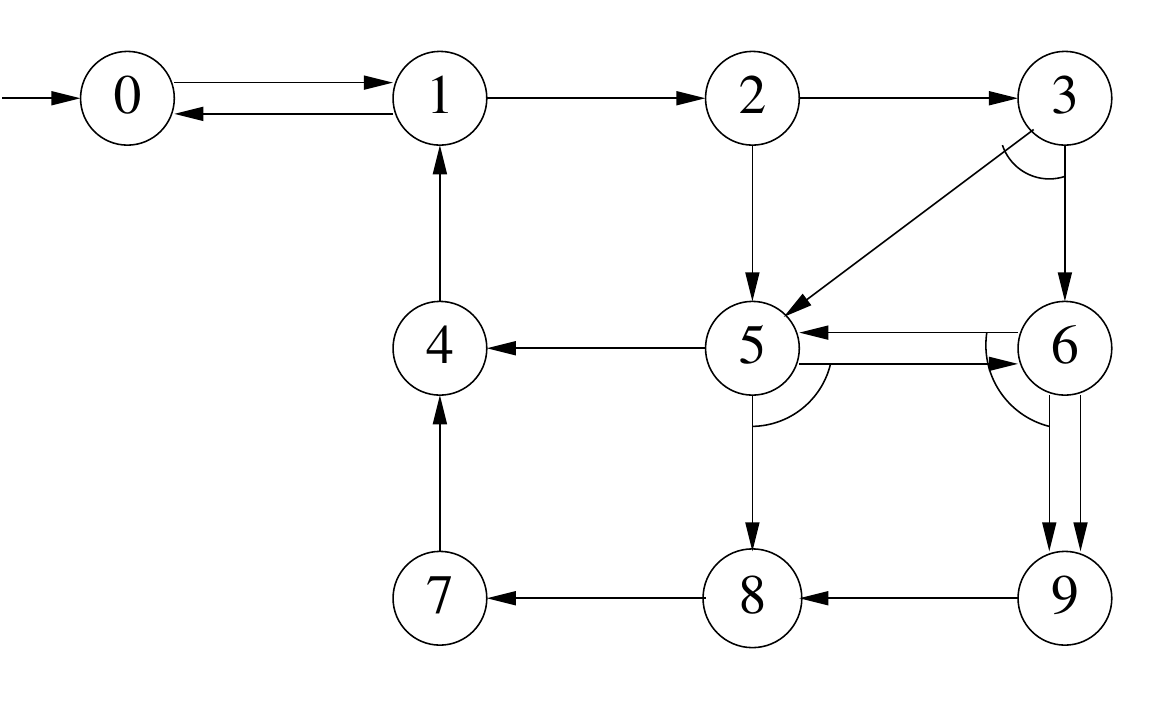_t}}
&
\raisebox{1.7cm}{
\small
\begin{tabular}{| c | c | c | c | c | c | c | c | c | c | c | c |}
\hline 
& Condition & 0 & 1 & 2 & 3 & 4 & 5 & 6 & 7 & 8 & 9 \\
\hline 
\hline 
$\strat_{init}$ & & $\alpha$ & -- & -- & -- & -- & -- & -- & -- & -- & --\\
\hline 
\hline 
$\strat_{p1}$ & before $\mathtt{job}$ & $\alpha$ & $\beta$ & $\alpha$ & $\alpha$ & $\alpha$ & $\gamma$ & $\gamma$ & $\alpha$ & $\alpha$ & $\gamma$ \\
\hline
& after $\mathtt{job}$ & $\alpha$ & $\alpha$ & $\alpha$ & $\alpha$ & $\alpha$ & $\gamma$ & $\gamma$ & $\alpha$ & $\alpha$ & $\gamma$ \\
\hline
\hline
$\strat_{p2}$ & & $\alpha$ & $\beta$ & $\alpha$ & $\alpha$ & $\alpha$ & $\gamma$ & $\gamma$ & $\alpha$ & $\alpha$ & $\gamma$ \\
\hline
\end{tabular}
}
\\
(a) & (b)
\end{tabular}
\caption{(a) Initialized MDP $\mdp$ with initial state 0. 
The costs of applying $\alpha,\beta,\gamma$ in any states are 5, 10, 1, respectively, \eg $g(1,\alpha)=5$. (b)~Definitions of strategies $\strat_{init},\strat_{p1},\strat_{p2}$ for $\mdp$, the projections of strategies $\strat_0,\strat_{\mec}^\phi$, $\strat_{\mec}^V$ for $\product$, respectively. The condition ``before $\mathtt{job}$'' means that the corresponding prescription is used if the job location has not yet been visited since the last visit of the base. Similarly, the prescription with condition ``after $\mathtt{job}$'' is used if the job location was visited at least once since the last visit of the base.}
\label{fig:casestudy}
\end{figure*}

The algorithm for finding the strategy $\strat_0$ and the optimal set $\maec$ are again polynomial in the size of $\product$. Similarly, computing a stationary strategy $\strat_{\mec}^\phi$ for an MAEC $\mec\in \maec$ is polynomial in the size of $\mec$.

As was proved in Sec.~\ref{subsec:stratproduct}, the presented solution to Problem~\ref{pf:acpc} is correct and complete. However, the resulting optimal strategy $\strat_{\product}$ for $\product$, and hence the projected strategy $\strat$ for $\mdp$ as well, is not a finite-memory strategy in general. 
The reason is that in the second phase of every round $i$, the strategy $\strat_{\mec}^{V}$ is applied for $l_i$ surveillance cycles and $l_i$ is generally growing with $i$. 

This, however, does not prevent the solution to be effectively used. The following simple rule can be applied to avoid performing all $l_i\geq \max\{i\cdot k_i\cdot g_{\product max},j(\tfrac{1}{i})\}$ surveillance cycles in every round $i$. When the computation is in the second phase of round $i$ and the product is in an MAEC $\mec\in \maec$, after completion of every surveillance cycle, we can check whether the average cumulative cost per surveillance cycle in round $i$ is at most $V_{\mec}^*+\tfrac{2}{i}$. If yes, we can proceed to the next round $i+1$, otherwise continue with the second phase of round $i$. As the simulation results in Sec.~\ref{sec:casestudy} show, the use of this simple rule dramatically decreases the number of performed surveillance cycles in almost every round.


On the other hand, the complexity of the resulting strategy $\strat$ for $\mdp$ can be reduced from non-finite-memory to finite-memory in the following case. Assume that for every $\mec\in \maec$, the optimal ACPC strategy $\strat^V_{\mec}$ leads to an EC that contains a state from $G$, where $\mec$ is accepting with respect to the pair $(B,G)\in Acc$. In this case, the optimal strategy $\strat_{\product}$ can be defined as a finite-memory strategy that first applies the strategy $\strat_0$ to reach a state of an MAEC $\mec\in \maec$, and from that point on, only applies the strategy $\strat^V_{\mec}$. 




\section{Case Study}\label{sec:casestudy}

We implemented the solution presented in Sec.~\ref{sec:solution} in Java and applied it to a persistent surveillance robotics example~\cite{applet}. In this section, we report on the simulation results.

Consider a mobile robot moving in a partitioned environment. The motion of the robot is modeled by the initialized MDP $\mdp$ shown in Fig.~\ref{fig:casestudy}a. The set $\ap$ of atomic propositions contains two propositions $\mathtt{base}$ and $\mathtt{job}$. As depicted in Fig.~\ref{fig:casestudy}a, state 0 is the base location and state 8 is the job location. At the job location, the robot performs some work, and at the base, it reports on its job activity.

The robot's mission is to visit both base and job location infinitely many times. In addition, at least one job must be performed after every visit of the base, before the base is visited again. 
The corresponding LTL formula is
$$\phi= \G\F\,\mathtt{base} \, \wedge \, \G\F\, \mathtt{job} \, \wedge \, \G\big(\mathtt{base} \Rightarrow \X(\neg \mathtt{base}\, \U\, \mathtt{job})\big).$$
While satisfying the formula, we want to minimize the expected average cost between two consecutive jobs, \ie the surveillance proposition $\pisur = \mathtt{job}$.

In the simulation, we use a Rabin automaton $\ra_{\phi}$ for the formula that has 5 states and the accepting condition contains 1 pair. The product $\product$ of the MDP $\mdp$ and $\ra_{\phi}$ has 50 states and one MAEC $\mec$ of 19 states. The optimal set of MAECs $\maec=\{\mec\}$. 
The optimal ACPC value $V_{\mec}^*=40.5$. In Fig.~\ref{fig:casestudy}b, we list the projections of strategies $\strat_0,\strat_{\mec}^\phi$, $\strat_{\mec}^V$ for $\product$ to strategies $\strat_{init},\strat_{p1},\strat_{p2}$ for $\mdp$, respectively. The optimal strategy $\strat$ for $\mdp$ is then defined as follows. Starting from the initial state 0, apply strategy $\strat_{init}$ until a state is reached, where $\strat_{init}$ is no longer defined. Start round number 1. In $i$-th round, proceed as follows. In the first phase of the round, apply strategy $\strat_{p1}$ until the base is reached and then for one more step (the product $\product$ has to reach a state from the Rabin pair). Let $k_i$ denote the number of steps in the first phase of round $i$. In the second phase, use strategy $\strat_{p2}$ for $l_i=\max\{i\cdot k_i\cdot 10,j(\tfrac{1}{i})\}$ surveillance cycles, \ie until the number of jobs performed by the robot is $l_i$. We also use the rule described in Sec.~\ref{subsec:compldisc} to shorten the second phase, if possible. 

Let us summarize the statistical results we obtained for 5 executions of the strategy $\strat$ for $\mdp$, each of 100 rounds. The number $k_i$ of steps in the first phase of a round $i>1$ was always 5 because in such case, the first phase starts at the job location and the strategy $\strat_{p1}$ needs to be applied for exactly 4 steps to reach the base. Therefore, in every round $i>1$, the number $l_i$ is at least $50\cdot i$, \eg in round 100, $l_i\geq 5000$. However, using the rule described in Sec.~\ref{subsec:compldisc}, the average number of jobs per round was 130 and the median was only 14. In particular, the number was not increasing with the round. On the contrary, it appears to be independent from the history of the execution. In addition, at most 2 rounds in each of the executions finished only at the point, when the number of 
jobs performed by the robot in the second phase reached $l_i$. The average ACPC value attained after 100 rounds was 40.56.

In contrast to our solution, the algorithm proposed in~\cite{dennisCDC11} does not find an optimal strategy for $\mdp$. Regardless of the initialization of the algorithm, it always results in a sub-optimal strategy, namely the strategy $\strat_{p1}$ from Fig.~\ref{fig:casestudy}b that has ACPC value 50.5. 


\section{Conclusion}

In this paper, we focus on the problem of designing a control strategy for an MDP to guarantee satisfaction of an LTL formula with surveillance task, and at the same time, to minimize the expected average cumulative cost between visits of surveillance states. This problem was previously addressed in~\cite{dennisCDC11}, where the authors propose a sub-optimal solution based on dynamic programming. In contrast to this work, we exploit recent results from theoretical computer science, namely game theory and probabilistic model checking, to provide a sound and complete solution to this control problem.


\bibliographystyle{abbrv}
\bibliography{mdp_acps_acpc}

\begin{thebibliography}{10}

\bibitem{alterovitz2007stochastic}
R.~Alterovitz, T.~Sim{\'e}on, and K.~Goldberg.
\newblock The stochastic motion roadmap: A sampling framework for planning with
  {M}arkov motion uncertainty.
\newblock In {\em Robotics: Science and Systems}. Citeseer, 2007.

\bibitem{gamesforcs}
K.~Apt and E.~Gr{\"a}del.
\newblock {\em Lectures in Game Theory for Computer Scientists}.
\newblock Cambridge University Press, 2011.

\bibitem{baierBook}
C.~Baier and J.~Katoen.
\newblock {\em Principles of model checking}.
\newblock The MIT Press, 2008.

\bibitem{bertsekasVolumeII}
D.~Bertsekas.
\newblock {\em Dynamic Programming and Optimal Control, vol. II}.
\newblock Athena Scientific Optimization and Computation Series. Athena
  Scientific, 2007.

\bibitem{chatterjeeMFCS11}
K.~Chatterjee and L.~Doyen.
\newblock {Energy and Mean-Payoff Parity Markov Decision Processes}.
\newblock In {\em Mathematical Foundations of Computer Science 2011}, volume
  6907 of {\em Lecture Notes in Computer Science}, pages 206--218. Springer
  Berlin Heidelberg, 2011.

\bibitem{mdpmeanpayoffenergyparity}
K.~Chatterjee and L.~Doyen.
\newblock Games and {Markov Decision Processes with Mean-Payoff Parity and
  Energy Parity Objectives}.
\newblock In {\em Mathematical and Engineering Methods in Computer Science},
  volume 7119 of {\em Lecture Notes in Computer Science}, pages 37--46.
  Springer Berlin Heidelberg, 2012.

\bibitem{chatterjeeFasterEC11}
K.~Chatterjee and M.~Henzinger.
\newblock {Faster and Dynamic Algorithms for Maximal End-Component
  Decomposition and Related Graph Problems in Probabilistic Verification}.
\newblock In {\em Proceedings of the Twenty-Second Annual ACM-SIAM Symposium on
  Discrete Algorithms, SODA'11}, pages 1318--1336, 2011.

\bibitem{janayushanICRA12}
Y.~Chen, J.~Tumova, and C.~Belta.
\newblock {LTL} robot motion control based on automata learning of
  environmental dynamics.
\newblock In {\em IEEE International Conference on Robotics and Automation,
  ICRA'12}, pages 5177--5182, 2012.

\bibitem{yannakakis95}
C.~Courcoubetis and M.~Yannakakis.
\newblock The complexity of probabilistic verification.
\newblock {\em Journal of the ACM}, 42(4):857--907, July 1995.

\bibitem{deAlfaroThesis97}
L.~de~Alfaro.
\newblock {\em Formal Verification of Probabilistic Systems}.
\newblock PhD thesis, Stanford University, 1997.
\newblock Technical report STAN-CS-TR-98-1601.

\bibitem{dennisCDC11}
X.~C. Ding, S.~Smith., C.~Belta, and D.~Rus.
\newblock {MDP} {O}ptimal {C}ontrol under {T}emporal {L}ogic {C}onstraints.
\newblock In {\em The 50th IEEE Conference on Decision and Control and European
  Control Conference (CDC-ECC)}, pages 532 --538, dec. 2011.

\bibitem{dennismdpltlmaxprob}
X.~C. Ding, S.~L. Smith, C.~Belta, and D.~Rus.
\newblock {LTL} {C}ontrol in {U}ncertain {E}nvironments with {P}robabilistic
  {S}atisfaction {G}uarantees.
\newblock In {\em Proceedings of the 18th IFAC World Congress}, volume~18,
  2011.

\bibitem{competitivemdps}
J.~Filar and K.~Vrieze.
\newblock {\em Competitive Markov Decision Processes}.
\newblock Springer, 1996.

\bibitem{ltldraProof}
E.~Gr{\"a}del, W.~Thomas, and T.~Wilke.
\newblock {\em Automata, Logics, and Infinite Games: A Guide to Current
  Research}, volume 2500 of {\em Lecture Notes in Computer Science}.
\newblock Springer, 2002.

\bibitem{ltl2dstar}
J.~Klein.
\newblock {ltl2dstar -- {LTL} to {D}eterministic {S}treett and {R}abin
  {A}utomata}, 2007.
\newblock \texttt{http://www.ltl2dstar.de/}.

\bibitem{ltldraAlgorithms}
J.~Klein and C.~Baier.
\newblock Experiments with deterministic $\omega$-automata for formulas of
  linear temporal logic.
\newblock {\em Theoretical Computer Science}, 363(2):182 -- 195, 2006.

\bibitem{LaAnBe-TRO-2011}
M.~Lahijanian, S.~B. Andersson, and C.~Belta.
\newblock Temporal logic motion planning and control with probabilistic
  satisfaction guarantees.
\newblock {\em IEEE Transaction on Robotics}, 28:396--409, 2011.

\bibitem{majajanaCDC12}
M.~Svore\v{n}ov\'a, J.~T\r{u}mov\'a, J.~Barnat, and I.~\v{C}ern\'a.
\newblock Attraction-{B}ased {R}eceding {H}orizon {P}ath {P}lanning with
  {T}emporal {L}ogic {C}onstraints.
\newblock In {\em Proceedings of the 51th IEEE Conference on Decision and
  Control, CDC'12}, pages 6749--6754, 2012.

\bibitem{majaACC13}
M.~Svore\v{n}ov\'a, I.~\v{C}ern\'a, and C.~Belta.
\newblock {Optimal Receding Horizon Control for Finite Deterministic Systems
  with Temporal Logic Constraints}.
\newblock In {\em The 2013 American Control Conference, ACC'13}, 2013.
\newblock To appear.

\bibitem{applet}
M.~Svore\v{n}ov\'{a}, I.~\v{C}ern\'{a}, and C.~Belta.
\newblock {Simulation of Optimal Control of MDPs with Temporal Logic
  Constraints}, 2013.
\newblock \texttt{http://www.fi.muni.cz/$\sim$x175388/simulationCDC13}.

\bibitem{uavMDP}
S.~Temizer, M.~J. Kochenderfer, L.~P. Kaelbling, T.~Lozano-Perez, and J.~K.
  Kuchar.
\newblock Collision {A}voidance for {U}nmanned {A}ircraft using {M}arkov
  {D}ecision {P}rocesses.
\newblock In {\em AIAA Guidance, Navigation and Control Conference}, 2010.

\end{thebibliography}

\end{document}